\newcommand{\lyxaddress}[1]{
\par {\raggedright #1
\vspace{1.4em}
\noindent\par}
}
\theoremstyle{plain}
\newtheorem{thm}{\protect\theoremname}[section]
  \theoremstyle{remark}
  \newtheorem{rem}[thm]{\protect\remarkname}
  \theoremstyle{definition}
  \newtheorem{defn}[thm]{\protect\definitionname}
  \theoremstyle{plain}
  \newtheorem{lem}[thm]{\protect\lemmaname}
\newcommand{\m}{\mathsf{m}}
\newcommand{\dd}{\mathsf{d}}
\newcommand{\SP}{\mathsf{SP}}
\newcommand{\Case}{\mathsf{Case}}
\def\frontmatter@abstractheading{}
\date{}
  \providecommand{\definitionname}{Definition}
  \providecommand{\lemmaname}{Lemma}
  \providecommand{\remarkname}{Remark}
\providecommand{\theoremname}{Theorem}
\begin{document}

\title{Noncontextuality with Marginal Selectivity in Reconstructing Mental
Architectures}

\author{Ru Zhang\textsuperscript{*} and Ehtibar N. Dzhafarov\textsuperscript{**}}

\maketitle

\lyxaddress{\begin{center}
Department of Psychological Sciences, Purdue University\\\textsuperscript{*}zhang617@purdue.edu
\\\textsuperscript{**}ehtibar@purdue.edu (corresponding author)
\par\end{center}}
\begin{abstract}
We present a general theory of series-parallel mental architectures
with selectively influenced stochastically non-independent components.
A mental architecture is a hypothetical network of processes aimed
at performing a task, of which we only observe the overall time it
takes under variable parameters of the task. It is usually assumed
that the network contains several processes selectively influenced
by different experimental factors, and then the question is asked
as to how these processes are arranged within the network, e.g., whether
they are concurrent or sequential. One way of doing this is to consider
the distribution functions for the overall processing time and compute
certain linear combinations thereof (interaction contrasts). The theory
of selective influences in psychology can be viewed as a special application
of the interdisciplinary theory of (non)contextuality having its origins
and main applications in quantum theory. In particular, lack of contextuality
is equivalent to the existence of a \textquotedblleft hidden\textquotedblright{}
random entity of which all the random variables in play are functions.
Consequently, for any given value of this common random entity, the
processing times and their compositions (minima, maxima, or sums)
become deterministic quantities. These quantities, in turn, can be
treated as random variables with (shifted) Heaviside distribution
functions, for which one can easily compute various linear combinations
across different treatments, including interaction contrasts. This
mathematical fact leads to a simple method, more general than the
previously used ones, to investigate and characterize the interaction
contrast for different types of series-parallel architectures{\normalsize{}. }{\normalsize \par}

\textsc{Keywords:} interaction contrast, mental architectures, noncontextuality,
response time, selective influences, series-parallel network.

\markboth{Zhang and Dzhafarov}{Mental Architectures}

Running head: Mental Architectures

\newpage{}
\end{abstract}

\section{Introduction\label{sec:Introduction}}

The notion of a network of mental processes with components selectively
influenced by different experimental factors was introduced to psychology
in Saul Sternberg's (1969) influential paper. Sternberg considered
networks of processes $a,b,c,\ldots$ involved in performing a mental
task. Denoting their respective durations by $A,B,C\ldots$, the hypothesis
he considered was that the observed response time $T$ is $A+B+C+\ldots$
. One cannot test this hypothesis, Sternberg wrote, without assuming
that there are some factors, $\alpha,\beta,\gamma,\ldots$, that selectively
influence the durations $A,B,C\ldots$, respectively. Sternberg's
analysis was confined to stochastically independent $A,B,C,\ldots$,
and the consequences of the assumptions of seriality and selective
influences were tested on the level of the mean response times only. 

Subsequent development of these ideas was aimed at the entire distributions
of the response times and at a greater diversity and complexity of
mental architectures than just series of ``stages.'' This development
prominently includes Roberts and Sternberg (1993), Schweickert, Giorgini,
and Dzhafarov (2000), Schweickert and Townsend (1989), Townsend (1984,
1990a, 1990b), Townsend and Nozawa (1995), Townsend and Schweickert
(1989), and several other publications, primarily by James Townsend
and Richard Schweickert with colleagues. For an overview of these
developments see Dzhafarov (2003) and Schweickert, Fisher, and Sung
(2012). In the present context we should separately mention the development
of the ideas of \emph{stochastic ordering} of processing times in
Townsend (1984, 1990a) and Townsend and Schweickert (1989); as well
as the idea of \emph{marginal selectivity} (Townsend \& Schweickert,
1989).

The notion of selective influences also underwent a significant development,
having been generalized from stochastically independent random variables
to arbitrarily interdependent ones (Dzhafarov, 2003; Dzhafarov \&
Gluhovsky, 2006; Kujala \& Dzhafarov, 2008; Dzhafarov \& Kujala, 2010;
in press). The essence of the development is easy to understand using
two random variables (e.g., process durations) $A,B$ selectively
influenced by two respective factors $\alpha,\beta$. In Dzhafarov's
(2003) notation, this is written $(A,B)\looparrowleft(\alpha,\beta)$.
According to the definition given in Dzhafarov (2003), this means
that there are functions $f$ and $g$ and a random variable $R$
(a \emph{common source of randomness}) such that $f\left(\alpha,R\right)=A$
and $g\left(\beta,R\right)=B$. If such a choice of $\left(f,g,R\right)$
exists, it is not unique. For instance, $R$ can always be chosen
to have any distribution that is absolutely continuous with respect
to the usual Borel measure on the real line (e.g., a standard uniform,
or standard normal distribution, see Dzhafarov \& Gluhovsky, 2006).
However, a triple $\left(f,g,R\right)$ need not exist. It does not
exist, e.g., if marginal selectivity (Townsend \& Schweickert, 1989)
is violated, i.e., if the distribution of, say, $A$ at a given value
of $\alpha$ changes in response to changing $\beta$. But marginal
selectivity is not sufficient for the existence of a triple $\left(f,g,R\right)$.
Let, e.g., $\alpha$ and $\beta$ be binary factors, with values $1,2$
each, and let the correlation $\rho$ between $A$ and $B$ for a
treatment $\left(\alpha,\beta\right)$ be denoted $\rho_{\alpha\beta}$.
Then the triple in question does not exist if the correlations violated
the ``cosphericity test'' (Kujala \& Dzhafarov, 2008), also known
in quantum mechanics as Landau's inequality (Landau, 1989):
\begin{equation}
\left|\rho_{11}\rho_{12}-\rho_{21}\rho_{22}\right|\leq\overline{\rho}_{11}\overline{\rho}_{12}+\overline{\rho}_{21}\overline{\rho}_{22},
\end{equation}
where $\overline{\rho}_{\alpha\beta}=\sqrt{1-\rho_{\alpha\beta}^{2}}$.
There are many other known conditions that must be satisfied for the
existence of a triple $\left(f,g,R\right)$ when marginal selectivity
is satisfied (Dzhafarov \& Kujala, 2010, 2012a,b, 2013, 2014a). 

The allusion to quantum mechanics is not accidental: as shown in Dzhafarov
and Kujala (2012a,b), the theory of selective influences in psychology
can be viewed as a special application of the theory of (non)contextuality.
This theory is interdisciplinary (Dzhafarov \& Kujala, 2014b-d; Khrennikov,
2009), but its origins are in quantum theory, dating from Kochen and
Specker (1967) and John Bell's (1964, 1966) celebrated work. For the
modern state of the theory see Dzhafarov, Kujala, and Larsson (2015).
A simplified account of the (non)contextuality analysis of the example
given above is as follows. One labels each random variable in play
\emph{contextually}, i.e., by what property is being measured/recorded
under what treatment (context): 
\begin{equation}
\left(\underbrace{A_{\textnormal{value of }\alpha}}_{\textnormal{property: what is measured}}\right)^{\underbrace{\left(\textnormal{value of }\alpha,\textnormal{value of }\beta\right)}_{\textnormal{context: under what treatement}}},\left(\underbrace{B_{\textnormal{value of }\beta}}_{\textnormal{property: what is measured}}\right)^{\underbrace{\left(\textnormal{value of }\alpha,\textnormal{value of }\beta\right)}_{\textnormal{context: under what treatement}}}.
\end{equation}
The notation here is, of course, redundant, because the context and
property identifiers overlap, but we need now to emphasize the logic
rather than achieve notational convenience. Once the labeling is done,
one looks at all possible joint distributions imposable on all these
random variables, for all properties and all treatments. A system
is noncontextual if there exists such a joint distributions in which
any two random variables that represent the same property (``what
is measured'') are equal with probability 1. The latter is possible
only if the random variables representing the same property always
have the same distribution: in our case
\begin{equation}
\left(A_{\alpha}\right)^{\left(\alpha,\beta\right)}\sim\left(A_{\alpha}\right)^{\left(\alpha,\beta'\right)},\left(B_{\beta}\right)^{\left(\alpha,\beta\right)}\sim\left(B_{\beta}\right)^{\left(\alpha',\beta\right)}
\end{equation}
for any values $\alpha,\beta,\alpha',\beta'$ of the two factors.
This is called consistent connectedness (Dzhafarov, Kujala, \& Larsson,
2015), and in physics is known under a variety of names, including
(in certain paradigms) ``no-signaling condition'' (Cereceda, 2000;
Masanes, Acin, \& Gisin, 2006; Popescu \& Rohrlich, 1994). In psychology,
this is marginal selectivity. The definition of noncontextuality just
given is not the most general one, as the notion of contextuality
can be extended to inconsistently connected (violating marginal selectivity)
systems (Dzhafarov, Kujala, \& Larsson, 2015), but we do not need
this generality in this paper. What is important for us here is that
the existence of a joint distribution mentioned in our definition
is equivalent to the existence of a random variable $R$ and the functions
$f,g$ mentioned in the introductory paragraph. 

It is easy to show (Dzhafarov, 2003) that the existence of a triple
$\left(f,g,R\right)$ for given joint distributions of $\left(A,B\right)$
under different treatments $\left(\alpha,\beta\right)$ is equivalent
to the existence of a quintuple $\left(f',g',S,S_{A},S_{B}\right)$,
where $S,S_{A},S_{B}$ are random variables, such that $f'\left(\alpha,S,S_{A}\right)=A$
and $g'\left(\beta,S,S_{B}\right)=B$. In such a representation, one
can speak of a common source of randomness $S$ and specific sources
of randomness $S_{A},S_{B}$. In Dzhafarov, Schweickert, and Sung
(2004) this representation was used to investigate different series-parallel
arrangements of the hypothetical durations $A$ and $B$. The reason
this representation has been considered convenient is that if one
fixes the value $S=s$, then $f'\left(\alpha,s,S_{A}\right)=A_{c}$
and $g'\left(\beta,s,S_{B}\right)=B_{c}$ are stochastically independent
random variables. One can therefore use theorems proved for stochastically
independent selectively influenced components (Schweickert, Giorgini,
\& Dzhafarov, 2000) to obtain a general result by averaging across
possible values of $s$. For instance, let $\alpha,\beta$ be binary
factors (with values $1,2$ each), and let us assume that the observed
duration $T_{\alpha\beta}$ is $\min\left(A_{\alpha},B_{\beta}\right)$
for every treatment $\left(\alpha,\beta\right)$. Then $T_{\alpha\beta s}=\min\left(A_{\alpha s},B_{\beta s}\right)$
for every value $S=s$, and it is known that, for the independent
$A_{\alpha s},B_{\beta s}$ (satisfying a prolongation condition,
as explained below), 
\begin{equation}
\Pr\left(T_{11s}\leq t\right)-\Pr\left(T_{12s}\leq t\right)-\Pr\left(T_{21s}\leq t\right)+\Pr\left(T_{22s}\leq t\right)\leq0.
\end{equation}
Since this should be true for every value $S=s$, then it should also
be true that
\begin{equation}
C\left(t\right)=\Pr\left(T_{11}\leq t\right)-\Pr\left(T_{12}\leq t\right)-\Pr\left(T_{21}\leq t\right)+\Pr\left(T_{22}\leq t\right)\leq0.\label{eq:simple contrast}
\end{equation}
This follows from the fact that
\begin{equation}
\Pr\left(T_{\alpha\beta}\leq t\right)=\int\Pr\left(T_{\alpha\beta s}\leq t\right)\dd\m\left(s\right),
\end{equation}
where $\m\left(s\right)$ is the probability measure for $S$, and
the integration is over the space of all possible $s$. The linear
combination $C\left(t\right)$ in (\ref{eq:simple contrast}) is called
the \emph{interaction contrast of distributions functions}.

The Prolongation Assumption used in Dzhafarov et al. (2004), and derived
from Townsend (1984, 1990a) and Townsend and Schweickert (1989), is
that, for every $S=s$, 
\begin{equation}
\Pr\left(A_{1s}\leq t\right)\geq\Pr\left(A_{2s}\leq t\right),\;\Pr\left(B_{1s}\leq t\right)\geq\Pr\left(B_{2s}\leq t\right).\label{eq:prolongation simple}
\end{equation}
For this particular architecture, $T=\min\left(A,B\right)$, this
is the only assumption needed. To prove analogous results for more
complex mental architectures, however, one needs additional assumptions,
such as the existence of density functions for $A_{\alpha s},B_{\beta s}$
at every $s$, and even certain ordering of these density functions
in some vicinity $\left[0,\tau\right]$. 

The same results, however, can be obtained without these additional
assumptions, if one adopts the other, equivalent definition of selective
influences: $f\left(\alpha,R\right)=A$ and $g\left(\beta,R\right)=B$,
for some triple $\left(f,g,R\right)$. If such a representation exists,
then 
\begin{equation}
a_{\alpha r}=f\left(\alpha,r\right),b_{\beta r}=g\left(\beta,r\right)
\end{equation}
are deterministic quantities (real numbers), for every value $R=r$.
Any real number $x$ in turn can be viewed as a random variable whose
distribution function is a shifted Heaviside function
\begin{equation}
h\left(t-x\right)=\left\{ \begin{array}{c}
0,\:if\:t<x,\\
1,\:if\:t\geq x.
\end{array}\right.
\end{equation}
In particular, the quantity $t_{\alpha\beta r}=\min\left(a_{\alpha r},b_{\beta r}\right)$
for the simple architecture $T=\min\left(A,B\right)$ considered above
is distributed according to
\begin{equation}
h\left(t-t_{\alpha\beta r}\right)=h_{\alpha\beta r}\left(t\right).
\end{equation}
Let us see how inequality (\ref{eq:simple contrast}) can be derived
using these observations. 

We first formulate the \emph{(conditional) Prolongation Assumption},
a deterministic version of (\ref{eq:prolongation simple}): the assumption
is that $f,g,R$ can be so chosen that for every $R=r$, 
\begin{equation}
a_{1r}\leq a_{2r},\;b_{1r}\leq b_{2r}.\label{eq:prolongation 2x2}
\end{equation}
Without loss of generality, we can also assume, for any given $r$,
\begin{equation}
a_{1r}\leq b_{1r}\label{eq:default 2x2}
\end{equation}
(if not, rename $a$ into $b$ and vice versa). 
\begin{rem}
The prolongation Assumption clearly implies (\ref{eq:prolongation simple}).
Conversely, if (\ref{eq:prolongation simple}) holds, one can always
find functions $f,g,R$ for which the Prolongation Assumption holds
in the form above. For instance, one can choose $R=\left(S,S_{A},S_{B}\right)$,
take $S_{A}$ and $S_{B}$ to be uniformly distributed between 0 and
1, and choose $f\left(\alpha,\ldots\right),g\left(\beta,\ldots\right)$
to be the quantile functions for the hypothetical distributions of
$A$ and $B$ at the corresponding factor levels. 
\end{rem}
We next form the conditional (i.e., conditioned on $R=r$) interaction
contrast
\begin{equation}
c_{r}\left(t\right)=h_{11r}\left(t\right)-h_{12r}\left(t\right)-h_{21r}\left(t\right)+h_{22r}\left(t\right).\label{eq:HIC}
\end{equation}

\medskip{}

\noindent\textbf{Notation Convention.} When $r$ is fixed throughout
a discussion, we omit this argument and write $a_{\alpha},b_{\beta},t_{\alpha\beta},h_{\alpha\beta}(t),c\left(t\right)$
in place of $a_{\alpha r},b_{\beta r},t_{\alpha\beta r},h_{\alpha\beta r}(t),c_{r}(t)$.
(For binary factors $\alpha,\beta$, we also conveniently replace
$\alpha,\beta$ in indexation with $i,j$.)\medskip{}

Following this convention, there are three different arrangements
of $a_{1},a_{2},b_{1},b_{2}$ (for a given $R=r$) satisfying (\ref{eq:prolongation 2x2})-(\ref{eq:default 2x2}):

\begin{equation}
\begin{array}{cc}
\textnormal{(i)} & a_{1}\leq b_{1}\leq a_{2}\leq b_{2}\\
\textnormal{(ii)} & a_{1}\leq a_{2}\leq b_{1}\leq b_{2}\\
\textnormal{(iii)} & a_{1}\leq b_{1}\leq b_{2}\leq a_{2}
\end{array}\label{eq:3 arrangements}
\end{equation}
In all three cases, 

\begin{equation}
t_{11}=\min\left(a_{1},b_{1}\right)=a_{1}=\min\left(a_{1},b_{2}\right)=t_{12}.
\end{equation}
For arrangement (i) we have

\[
\xymatrix{\ar@{-}+<0ex,0ex>;[rrrrrr] & \overset{}{\underset{\begin{array}{c}
t_{11}=t_{12}=a_{1}\end{array}}{\bullet}} & \overset{\begin{array}{l}
+h_{11}\left(t\right)=1\\
-h_{12}\left(t\right)=-1\\
-h_{21}\left(t\right)=-0\\
+h_{22}\left(t\right)=0\\
=c\left(t\right)=0\\
\\
\end{array}}{\underset{\begin{array}{c}
\leq t<\end{array}}{}} & \overset{}{\underset{\begin{array}{c}
t_{21}=b_{1}\end{array}}{\bullet}} & \overset{\begin{array}{l}
+h_{11}\left(t\right)=1\\
-h_{12}\left(t\right)=-1\\
-h_{21}\left(t\right)=-1\\
+h_{22}\left(t\right)=0\\
=c\left(t\right)=-1\\
\\
\end{array}}{\underset{\begin{array}{c}
\leq t<\end{array}}{}} & \overset{}{\underset{\begin{array}{c}
t_{22}=a_{2}\end{array}}{\bullet}} & \,}
\]
This diagram shows the values of $h_{ijr}\left(t\right)$ and the
resulting values of $c_{r}\left(t\right)$ as $t$ changes with respect
to the fixed positions of $t_{ijr}$ (with index $r$ dropped everywhere).
Analogously, for arrangements (ii) and (iii), we have, respectively
\[
\xymatrix{\ar@{-}+<0ex,0ex>;[rrrr] & \overset{}{\underset{\begin{array}{c}
t_{11}=t_{12}=a_{1}\end{array}}{\bullet}} & \overset{\begin{array}{c}
c\left(t\right)=0\\
\\
\end{array}}{\underset{\begin{array}{c}
\leq t<\\
\\
\end{array}}{}} & \overset{}{\underset{\begin{array}{c}
t_{21}=t_{22}=a_{2}\end{array}}{\bullet}} & \,}
\]

and
\[
\xymatrix{\ar@{-}+<0ex,0ex>;[rrrrrr] & \overset{}{\underset{\begin{array}{c}
t_{11}=t_{12}=a_{1}\end{array}}{\bullet}} & \overset{\begin{array}{c}
c\left(t\right)=0\\
\\
\end{array}}{\underset{\begin{array}{c}
\leq t<\\
\\
\end{array}}{}} & \overset{}{\underset{\begin{array}{c}
t_{21}=b_{1}\end{array}}{\bullet}} & \overset{\begin{array}{c}
c\left(t\right)<0\\
\\
\end{array}}{\underset{\begin{array}{c}
\leq t<\\
\\
\end{array}}{}} & \overset{}{\underset{\begin{array}{c}
t_{22}=b_{2}\end{array}}{\bullet}} & \,}
\]
In all three cases, $c\left(t\right)$ is obviously zero for $t<t_{11}$
and $t\geq t_{22}$. We see that $c(t)=c_{r}\left(t\right)\leq0$
for all $t$ and every $R=r$. It follows that $C\left(t\right)\leq0$,
because
\begin{equation}
\Pr\left(T_{ij}\leq t\right)=\int h_{ijr}\left(t\right)\dd\mu\left(r\right),
\end{equation}
for $i,j\in\left\{ 1,2\right\} $, and
\begin{equation}
C\left(t\right)=\int c_{r}\left(t\right)\dd\mu\left(r\right)\leq0,
\end{equation}
where $\mu$ is the probability measure associated with $R$ and the
integration is over all possible $r$. We obtain the same result as
in (\ref{eq:simple contrast}), but in a very different way.

In this paper we extend this approach to other mental architectures
belonging to the class of series-parallel networks, those involving
other composition operations and possibly more than just two selectively
influenced processes. In doing so we follow a long trail of work mentioned
earlier. When dealing with multiple processes we follow Yang, Fific,
and Townsend (2013) in using high-order interaction contrasts. All
our results are replications or straightforward generalizations of
the results already known: the primary value of our work therefore
is not in characterizing mental architectures, but rather in demonstrating
a new theoretical approach and a new proof technique.

\subsection{Definitions, Terminology, and Notation}

Since we deal with the durations of processes rather than the processes
themselves, we use the term \emph{composition} to describe a function
that relates the durations of the components of a network to the overall
(observed) duration. Formally, a composition is a real-valued function
$t=t\left(a,b,\ldots,z\right)$ of an arbitrary number of real-valued
arguments. The arguments $a,b,\ldots,z$ are referred to as \emph{durations}
or \emph{components}. In this article, we will use $X\wedge Y\wedge\ldots\wedge Z$
to denote $\min(X,Y,\ldots,Z)$, and $X\vee Y\vee\ldots\vee Z$ to
denote $\max(X,Y\ldots,Z)$.

A\emph{ series-parallel composition} ($\SP$) is defined as follows.
\begin{defn}
\label{def:SP}(1) A single duration is an $\SP$ composition. (2)
If $X$ and $Y$ are $\SP$ compositions with disjoint sets of arguments,
then $X\wedge Y$, $X\vee Y$, and $X+Y$ are $\SP$ compositions.
(3) There are no other $\SP$ compositions than those construable
by Rules 1 and 2.\end{defn}
\begin{rem}
The requirement that $X$ and $Y$ in Rule 2 have disjoint sets of
arguments prevents expressions like $X\wedge X$ or $X+X\vee Y$.
But if the second $X$ in $X\wedge X$ is renamed into $X'$, or $X\vee Y$
in $X+X\vee Y$ is renamed into $Z$, then the resulting $X\wedge X'$
and $X+Z$ are legitimate $\SP$ compositions. This follows from the
generality of our treatment, in which different components of an $\SP$
composition may have arbitrary joint distributions: e.g., $X$ and
$X'$ in $X\wedge X'$ may very well be jointly distributed so that
$\Pr\left[X=X'\right]=1$. One should, however, always keep in minds
the pattern of selective influences: thus, if $X$ is influenced by
$\alpha$, then $Z$ is also influenced by $\alpha$ in $X+Z$ above. 
\end{rem}
Any $\SP$ composition is obtained by a successive application of
Rules 1 and 2 (the sequence being generally non-unique), and at any
intermediate stage of such a sequence we also have an $\SP$ composition
that we can term a \emph{subcomposition}.
\begin{defn}
\label{def:parallel-sequential}Two durations $X,Y$ in an $\SP$
composition are said to be \emph{parallel} or \emph{concurrent} if
there is a subcomposition of this $\SP$ composition of the form $\SP^{1}\left(X,X',\ldots\right)\wedge\SP^{2}\left(Y,Y',\ldots\right)$
(in which case $X,Y$ are said to be \emph{min-parallel}) or $\SP^{1}\left(X,X',\ldots\right)\vee\SP^{2}\left(Y,Y',\ldots\right)$
($X,Y$ are \emph{max-parallel}). $X,Y$ in an $\SP$ composition
are said to be \emph{sequential} or $serial$ if there is a subcomposition
of this $\SP$ composition of the form $\SP^{1}\left(X,X',\ldots\right)+\SP^{2}\left(Y,Y',\ldots\right)$.
\end{defn}

\begin{defn}
\label{def:homogeneous-mixed}An $\SP$ composition is called \emph{homogeneous}
if it does not contain both $\wedge$ and $\vee$ in it; if it does
not contain $\wedge$, it is denoted $\SP_{\vee}$; if it does not
contain $\vee$, it is denoted $\SP_{\wedge}$.
\end{defn}
The only $\SP$ composition that is both $\SP_{\wedge}$ and $\SP_{\vee}$
is a purely serial one: $a+b+\ldots+z$. Most of the results previously
obtained for mental networks are confined to homogeneous compositions.
We will not need this constraint for the most part.

Since we will be dealing with compositions of more than just two components,
we need to extend the definition of selective influences mentioned
above. In the formulation below, $\sim$ stands for ``has the same
distribution as.'' A treatment $\phi=\left(\lambda_{i_{1}}^{1},\ldots,\lambda_{i_{n}}^{n}\right)$
is a vector of values of the factors $\lambda^{1},\ldots,\lambda^{n}$,
the values of $\lambda^{k}$ ($k=1,\ldots,n$) being indicated by
subscripts, $\lambda_{i_{k}}^{k}$.
\begin{defn}
\label{def:SI}Random variables $(X^{1},\ldots,X^{n})$ are \emph{selectively
influenced} by factors $(\lambda^{1},\ldots,\lambda^{n})$, respectively,

\begin{equation}
(X^{1},\ldots,X^{n})\looparrowleft(\lambda^{1},\ldots,\lambda^{n}),
\end{equation}
if for some random variable $R$, whose distribution does not depend
on $(\lambda^{1},\ldots\lambda^{n})$, and for some functions $g_{1},\ldots,g_{n}$,

\begin{equation}
(X_{\phi}^{1},\ldots,X_{\phi}^{n})\sim(g_{1}(\lambda_{i_{1}}^{1},R),\ldots,g_{n}(\lambda_{i_{n}}^{n},R)),
\end{equation}
for any treatment $\phi=\left(\lambda_{i_{1}}^{1},\ldots,\lambda_{i_{n}}^{n}\right)$. 
\end{defn}
In the subsequent discussion we assume that all non-dummy factors
involved are binary in a completely crossed design (i.e., the overall
time $T$ is recorded for all $2^{n}$ vectors of values for $\phi$).
When we have random variables not influenced by any of these factors,
we will say they selectively influenced by an empty set of factors
(we could also, equivalently, introduce for them dummy factors, with
one value each).

\section{$\SP$ Compositions Containing Two Selectively Influenced Processes }

Consider two processes, with durations $A$ and $B$ in an $\SP$
composition. The overall duration of this $\SP$ composition can be
written as a function of $A,B$ and other components: $T=T(A,B,\ldots)$.
We assume that $A,B,$ and all other components are selectively influenced
by $\alpha$, $\beta$, and empty set, respectively: $(A,B,\ldots)\looparrowleft(\alpha,\beta,\emptyset)$.
Let each factor has two levels: $\alpha=1,2$ and $\beta=1,2$, with
four allowable treatments $(1,1)$, $(1,2)$, $(2,1)$, and $(2,2)$.
The corresponding overall durations (random variables) are written
as $T_{11},T_{12},T_{21},$ and $T_{22}$.

By Definition \ref{def:SI} of selective influences, each process
duration (a random variable) is a function of some random variable
$R$ and the corresponding factor: $A=a\left(\alpha,R\right)$, $B=b\left(\beta,R\right).$
For any given value $R=r$, the component durations are fixed numbers,
\begin{equation}
\begin{array}{ccc}
a\left(\alpha=1,r\right)=a_{1r}, &  & a\left(\alpha=2,r\right)=a_{2r},\\
b\left(\beta=1,r\right)=b_{1r}, &  & b\left(\beta=2,r\right)=b_{2r},\\
 & x\left(\emptyset,r\right)=x_{r},
\end{array}
\end{equation}
where $x$ is the value of any duration $X$ in the composition other
than $A$ and $B$. We assume that $R$ is chosen so that the Prolongation
Assumption (\ref{eq:prolongation 2x2}) holds, with the convention
(\ref{eq:default 2x2}). 

The overall duration $T$ at $R=r$ is also a fixed number, written
as (recall that we replace $\alpha,\beta$ in indexation with $i,j$)
\begin{equation}
T\left(a_{ir},\beta_{jr},\ldots\right)=t_{ijr},i,j\in\left\{ 1,2\right\} .
\end{equation}
The distribution function for $t_{ijr}$ is the shifted Heaviside
function $h_{ijr}\left(t\right)=h\left(t-t_{ijr}\right)$,
\begin{equation}
\xymatrix{ &  & \bullet\ar@{-}+<0ex,0ex>;[rrrr] &  &  &  & \;1\\
\ar@{-}[rrrrr]+<0ex,0ex>; &  & \underset{t_{ijr}}{{\bullet}}\ar@{.}[u] &  & \ar@{->}[rr]_{\textnormal{time}} &  & \;0
}
\end{equation}
The \emph{conditional interaction contrast} $c_{r}\left(t\right)$
is defined by (\ref{eq:HIC}). Denoting by $H_{ij}(t)$ the distribution
function of $T_{ij}$, we have 
\begin{equation}
H_{ij}\left(t\right)=\int_{\mathcal{R}}h_{ijr}\left(t\right)\dd\mu_{r},
\end{equation}
with $\mathcal{R}$ denoting the set of possible values of $R$. For
the observable (i.e., estimable from data) interaction contrast

\begin{equation}
C\left(t\right)=H_{11}\left(t\right)-H_{12}\left(t\right)-H_{21}\left(t\right)+H_{22}\left(t\right),\label{eq:C(t)}
\end{equation}
we have then

\begin{equation}
C\left(t\right)=\int_{\mathcal{R}}c_{r}\left(t\right)\dd\mu_{r}.\label{eq:basic integral}
\end{equation}
Note that it follows from our Prolongation Assumption that
\begin{equation}
H_{11}\left(t\right)\geq H_{12}\left(t\right),\;H_{21}\left(t\right)\geq H_{22}\left(t\right),\;H_{11}\left(t\right)\geq H_{21}\left(t\right),\;H_{12}\left(t\right)\geq H_{22}\left(t\right).
\end{equation}
We also define two conditional cumulative interaction contrasts (conditioned
on $R=r$):

\begin{equation}
c\left(0,t\right)=\int_{0}^{t}c\left(\tau\right)\dd\tau.\label{eq:cum2(0,t)}
\end{equation}
\begin{equation}
c\left(t,\infty\right)=\int_{t}^{\infty}c\left(\tau\right)\dd\tau=\lim_{u\rightarrow\infty}\int_{t}^{u}c\left(\tau\right)\dd\tau.\label{eq:cum2(t,infty)}
\end{equation}
The corresponding observable cumulative interaction contrasts are

\begin{equation}
C\left(0,t\right)=\int_{\mathcal{R}}c\left(0,t\right)\dd\mu_{r}=\int_{\mathcal{R}}\left(\int_{0}^{t}c\left(\tau\right)\dd\tau\right)\dd\mu_{r}=\int_{0}^{t}\left(\int_{\mathcal{R}}c\left(\tau\right)\dd\mu_{r}\right)\dd\tau=\int_{0}^{t}C\left(\tau\right)\dd\tau.\label{eq:Cum2(0,t)}
\end{equation}
\begin{equation}
C\left(t,\infty\right)=\int_{\mathcal{R}}c\left(t,\infty\right)\dd\mu_{r}=\int_{\mathcal{R}}\left(\int_{t}^{\infty}c\left(\tau\right)\dd\tau\right)\dd\mu_{r}=\int_{t}^{\infty}\left(\int_{\mathcal{R}}c\left(\tau\right)\dd\mu_{r}\right)\dd\tau=\int_{t}^{\infty}C\left(\tau\right)\dd\tau.\label{eq:Cum2(t,infty)}
\end{equation}
In these formulas we could switch the order of integration by Fubini's
theorem, because, for any interval of reals $I$, 

\begin{equation}
\int_{I\mathcal{\times R}}\left|c\left(\tau\right)\right|\dd\left(\tau\times\mu_{r}\right)\leq\int_{I\mathcal{\times R}}2\dd\left(\tau\times\mu_{r}\right)\leq2.
\end{equation}

\subsection{Four lemmas}

Recall the definition of $c_{r}\left(t\right)$ in (\ref{eq:HIC}).
We follow our Notation Convention and drop the index $r$ in $c_{r}\left(t\right)$
and all other expressions for a fixed $r$. 
\begin{lem}
\label{lem:In-any-SP}In any $\SP$ architecture, for any $r$, 
\[
t_{11}\leq t_{12}\wedge t_{21}\leq t_{12}\vee t_{21}\leq t_{22}.
\]
\end{lem}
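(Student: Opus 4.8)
The plan is to reduce the whole chain of inequalities to a single monotonicity property of $\SP$ compositions. The middle inequality $t_{12}\wedge t_{21}\le t_{12}\vee t_{21}$ is immediate, since $\min\le\max$; so everything comes down to the four inequalities $t_{11}\le t_{12}$, $t_{11}\le t_{21}$, $t_{12}\le t_{22}$, $t_{21}\le t_{22}$. These in turn will follow once I show that, with $r$ fixed and all components other than $A,B$ frozen at their values $x_r$, the overall duration $t_{ijr}=T(a_{ir},b_{jr},x_r,\ldots)$ is nondecreasing in $a_{ir}$ and nondecreasing in $b_{jr}$; combined with the Prolongation Assumption $a_{1r}\le a_{2r}$, $b_{1r}\le b_{2r}$ this yields all four.

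First I would establish the monotonicity claim: every $\SP$ composition $t=t(a,b,\ldots,z)$ is a nondecreasing function of each of its arguments (the others held fixed). This is proved by structural induction following Definition \ref{def:SP}. The base case (Rule 1), a single duration, is the identity map, which is nondecreasing. For the inductive step (Rule 2), suppose $X$ and $Y$ are $\SP$ compositions, each nondecreasing in every one of its arguments, with disjoint argument sets. Each of the binary operations $\wedge$, $\vee$, $+$ is nondecreasing in each of its two slots, so $X\wedge Y$, $X\vee Y$, and $X+Y$, being compositions of nondecreasing maps, are nondecreasing in $X$ and in $Y$ separately; since the argument sets of $X$ and $Y$ are disjoint, weakly increasing any single argument of the whole composition weakly increases exactly one of $X$, $Y$, hence the whole. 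This closes the induction.

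Then I would substitute back into the problem at hand. Since $(A,B,\ldots)\looparrowleft(\alpha,\beta,\emptyset)$, we may write $A=a(\alpha,R)$, $B=b(\beta,R)$, with every non-selectively-influenced component a function of $R$ alone, and $R$ is chosen so that the Prolongation Assumption (\ref{eq:prolongation 2x2}) holds. Fixing $R=r$ and applying the monotonicity claim to $T$ with the components other than $A,B$ set to $x_r$: from $a_{1r}\le a_{2r}$ we get $t_{1jr}\le t_{2jr}$ for $j=1,2$, and from $b_{1r}\le b_{2r}$ we get $t_{i1r}\le t_{i2r}$ for $i=1,2$. Hence $t_{11r}\le t_{12r}$, $t_{11r}\le t_{21r}$, $t_{12r}\le t_{22r}$, and $t_{21r}\le t_{22r}$, which together with $t_{12r}\wedge t_{21r}\le t_{12r}\vee t_{21r}$ is exactly the asserted chain (after dropping the index $r$ per the Notation Convention).

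I do not expect a genuine obstacle here; the proof is a short induction plus a substitution. The one point that deserves care is the bookkeeping in the inductive step, namely that the disjointness of the argument sets required in Rule 2 of Definition \ref{def:SP} is precisely what allows one to vary a single argument of the composite while perturbing only one of the two subcompositions. Everything else is the elementary fact that $\min$, $\max$, and $+$ are coordinatewise nondecreasing and that a composition of nondecreasing functions is nondecreasing.
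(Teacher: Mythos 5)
Your proof is correct and takes the same route as the paper, which simply asserts that the lemma ``follows from the (nonstrict) monotonicity of the $\SP$ composition in all arguments.'' You merely make explicit the structural induction behind that monotonicity and the reduction of the three-part chain to the four pairwise inequalities, both of which the paper treats as immediate.
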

\begin{proof}
Follows from the (nonstrict) monotonicity of the $\SP$ composition
in all arguments.\end{proof}
\begin{lem}
\label{lem:In-any-SP2}In any $\SP$ architecture, for any $r$, $c\left(t\right)$
equals 0 for all values of $t$ except for two cases: 

($\Case{}^{+}$) if $t_{11}\leq t<t_{12}\wedge t_{21}$, then $c\left(t\right)=1-0-0+0>0$,

and

($\Case^{-}$) if $t_{12}\vee t_{21}\leq t<t_{22}$, then $c\left(t\right)=1-1-1+0<0$.\end{lem}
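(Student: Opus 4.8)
The plan is to directly enumerate the relevant positions of $t$ relative to the four fixed numbers $t_{11},t_{12},t_{21},t_{22}$ (for the fixed $r$), using Lemma~\ref{lem:In-any-SP} to control their order. Recall that $h_{ij}(t)=h(t-t_{ij})$ is the shifted Heaviside function, so $h_{ij}(t)=1$ exactly when $t\geq t_{ij}$ and $0$ otherwise; thus $c(t)=h_{11}(t)-h_{12}(t)-h_{21}(t)+h_{22}(t)$ is a piecewise-constant integer-valued function of $t$ whose only possible breakpoints are at $t_{11},t_{12},t_{21},t_{22}$. The value of $c(t)$ on each interval between consecutive breakpoints is determined entirely by which of the four inequalities $t\geq t_{ij}$ hold.

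First I would record the two easy regions. For $t<t_{11}$: by Lemma~\ref{lem:In-any-SP}, $t_{11}$ is the smallest of the four, so all four Heaviside terms vanish and $c(t)=0$. For $t\geq t_{22}$: again by Lemma~\ref{lem:In-any-SP}, $t_{22}$ is the largest, so all four terms equal $1$ and $c(t)=1-1-1+1=0$. This disposes of everything outside $[t_{11},t_{22})$.

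Next I would handle the interior $t_{11}\leq t<t_{22}$. Here $h_{11}(t)=1$ always (since $t\geq t_{11}$) and $h_{22}(t)=0$ always (since $t<t_{22}$), so on this range $c(t)=1-h_{12}(t)-h_{21}(t)$, which can only take the values $1$, $0$, or $-1$ depending on how many of $t_{12},t_{21}$ lie at or below $t$. The value $c(t)=1$ occurs precisely when $t<t_{12}$ and $t<t_{21}$, i.e. $t<t_{12}\wedge t_{21}$; combined with $t\geq t_{11}$ this is exactly $\Case^{+}$, and Lemma~\ref{lem:In-any-SP} guarantees $t_{11}\leq t_{12}\wedge t_{21}$ so this interval is nonempty-admissible (possibly empty, but the stated value holds wherever it is defined). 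The value $c(t)=-1$ occurs precisely when $t\geq t_{12}$ and $t\geq t_{21}$, i.e. $t\geq t_{12}\vee t_{21}$; combined with $t<t_{22}$ this is exactly $\Case^{-}$, and Lemma~\ref{lem:In-any-SP} gives $t_{12}\vee t_{21}\leq t_{22}$ so again this is consistent. In the remaining subcase, exactly one of $t_{12},t_{21}$ is $\leq t$ (which is possible only when $t_{12}\neq t_{21}$), and then $c(t)=1-1-0=0$. Writing out the arithmetic $1-0-0+0=1$ for $\Case^{+}$ and $1-1-1+0=-1$ for $\Case^{-}$ matches the claimed expressions verbatim.

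I do not anticipate a genuine obstacle here: the whole argument is a finite case split, and the only nontrivial input is Lemma~\ref{lem:In-any-SP}, which pins down $t_{11}$ as the minimum and $t_{22}$ as the maximum and places $t_{12}\wedge t_{21}$ and $t_{12}\vee t_{21}$ between them. The one point requiring a little care is that $t_{12}$ and $t_{21}$ are not ordered relative to each other, so I must phrase the interior analysis in terms of $t_{12}\wedge t_{21}$ and $t_{12}\vee t_{21}$ rather than $t_{12}$ and $t_{21}$ separately; the ``middle'' region where $c(t)=0$ then corresponds to $t_{12}\wedge t_{21}\leq t<t_{12}\vee t_{21}$, which is empty when $t_{12}=t_{21}$ and otherwise contributes nothing to the contrast. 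That is precisely why the statement asserts $c(t)=0$ outside the two named cases.
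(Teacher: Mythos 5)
Your proof is correct and is exactly the ``direct computation'' that the paper's one-line proof alludes to: a finite case split on the position of $t$ relative to $t_{11},t_{12},t_{21},t_{22}$, using Lemma~\ref{lem:In-any-SP} to pin down the ordering. No difference in approach, only in the level of detail written out.
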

\begin{proof}
By direct computation.\end{proof}
\begin{lem}
\label{lem:In-any-SP2-1}In any $\SP$ architecture, for any r, $c\left(t\right)\leq0$
for all values of t if and only if $t_{11}=t_{12}\wedge t_{21}$;
$c\left(t\right)\geq0$ for all values of t if and only if $t_{12}\vee t_{21}=t_{22}$. \end{lem}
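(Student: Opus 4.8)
The plan is to read the statement off directly from the two preceding lemmas, since together they pin down the entire sign behaviour of $c(t)$. By Lemma~\ref{lem:In-any-SP2}, $c(t)=0$ for every $t$ outside the two half-open intervals $I^{+}=[t_{11},\,t_{12}\wedge t_{21})$, on which $c(t)=1$, and $I^{-}=[t_{12}\vee t_{21},\,t_{22})$, on which $c(t)=-1$. Moreover these two intervals are disjoint, because $t_{12}\wedge t_{21}\leq t_{12}\vee t_{21}$, so the positive part and the negative part of $c$ never interact. Consequently $c(t)\leq 0$ for all $t$ holds exactly when $I^{+}=\varnothing$, and $c(t)\geq 0$ for all $t$ holds exactly when $I^{-}=\varnothing$.

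It then remains to translate emptiness of each interval into the claimed equality. The interval $I^{+}$ is empty iff $t_{11}\geq t_{12}\wedge t_{21}$; but Lemma~\ref{lem:In-any-SP} always gives $t_{11}\leq t_{12}\wedge t_{21}$, so this is equivalent to $t_{11}=t_{12}\wedge t_{21}$. Symmetrically, $I^{-}$ is empty iff $t_{12}\vee t_{21}\geq t_{22}$, and Lemma~\ref{lem:In-any-SP} always gives $t_{12}\vee t_{21}\leq t_{22}$, so this is equivalent to $t_{12}\vee t_{21}=t_{22}$. Combining with the previous paragraph yields both biconditionals.

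There is essentially no obstacle; the only point requiring a word of care is the ``only if'' direction, where one uses that the values of $c$ on $I^{+}$ and $I^{-}$ are \emph{strictly} positive and strictly negative, respectively (the strict inequalities $1-0-0+0>0$ and $1-1-1+0<0$ recorded in Lemma~\ref{lem:In-any-SP2}). Thus if $I^{+}\neq\varnothing$ then $c$ is strictly positive somewhere, so $c(t)\leq 0$ fails, and likewise $I^{-}\neq\varnothing$ forces $c(t)\geq 0$ to fail. The ``if'' direction is immediate, since emptiness of the relevant interval makes $c\equiv 0$, which has the required sign everywhere.
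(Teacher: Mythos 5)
Your argument is correct and is exactly the route the paper takes: the paper's proof simply states that the lemma follows immediately from Lemma~\ref{lem:In-any-SP2} (together with the ordering in Lemma~\ref{lem:In-any-SP}), and your write-up just makes explicit the observation that each sign condition is equivalent to the emptiness of the corresponding half-open interval. No issues.
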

\begin{proof}
Immediately follows from Lemma \ref{lem:In-any-SP2}.\end{proof}
\begin{lem}
\label{lem:In-any-SP3-1}In any $\SP$ architecture, for any $r$, 

(i) $c\left(0,t\right)=\int_{0}^{t}c\left(\tau\right)d\tau\geq0$
for any $t$ if and only if \textup{$-t_{11}+t_{12}+t_{21}-t_{22}\geq0$,
 }

(ii) $c\left(t,\infty\right)=\int_{t}^{\infty}c\left(\tau\right)d\tau\leq0$
for any $t$ if and only if \textup{$-t_{11}+t_{12}+t_{21}-t_{22}\leq0$,}

(iii) \textup{$\lim_{t\rightarrow\infty}c\left(0,t\right)=0$ }if
and only if \textup{$-t_{11}+t_{12}+t_{21}-t_{22}=0.$}

(iv) \textup{$\lim_{t\rightarrow0}c\left(t,\infty\right)=0$ }if and
only if \textup{$-t_{11}+t_{12}+t_{21}-t_{22}=0.$}\end{lem}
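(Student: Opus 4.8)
The plan is to derive everything from the explicit description of $c(\tau)$ given in Lemma~\ref{lem:In-any-SP2}, since $c(0,t)$ and $c(t,\infty)$ are just integrals of that piecewise-constant function. By Lemma~\ref{lem:In-any-SP2}, $c(\tau)$ is supported on at most two disjoint intervals: it equals $+1$ on $[t_{11},t_{12}\wedge t_{21})$, which has length $\ell^{+}=(t_{12}\wedge t_{21})-t_{11}\geq0$, and equals $-1$ on $[t_{12}\vee t_{21},t_{22})$, which has length $\ell^{-}=t_{22}-(t_{12}\vee t_{21})\geq0$; both lengths are nonnegative by Lemma~\ref{lem:In-any-SP}. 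The key arithmetic identity is that the signed total area is $\int_{0}^{\infty}c(\tau)\dd\tau=\ell^{+}-\ell^{-}=\bigl[(t_{12}\wedge t_{21})-t_{11}\bigr]-\bigl[t_{22}-(t_{12}\vee t_{21})\bigr]=-t_{11}+t_{12}+t_{21}-t_{22}$, using $(t_{12}\wedge t_{21})+(t_{12}\vee t_{21})=t_{12}+t_{21}$. Call this quantity $\Delta$.

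First I would establish (iii) and (iv) as essentially immediate: $\lim_{t\to\infty}c(0,t)=\int_{0}^{\infty}c(\tau)\dd\tau=\Delta$ and $\lim_{t\to0}c(t,\infty)=\int_{0}^{\infty}c(\tau)\dd\tau=\Delta$ (durations are nonnegative, so the support of $c$ lies in $[0,\infty)$), so each limit is zero iff $\Delta=0$. Next, for (i), I would observe that $c(0,t)$ as a function of $t$ starts at $0$, is nondecreasing on $[t_{11},t_{12}\wedge t_{21}]$ up to its maximum value $\ell^{+}$, stays flat until $t_{12}\vee t_{21}$, then decreases on $[t_{12}\vee t_{21},t_{22}]$ down to its eventual constant value $\Delta=\ell^{+}-\ell^{-}$. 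Hence the \emph{minimum} of $c(0,t)$ over all $t\geq0$ is $\min(0,\Delta)$, so $c(0,t)\geq0$ for all $t$ iff $\Delta\geq0$. Part (ii) is the mirror image: $c(t,\infty)=\Delta-c(0,t)$ (since $\int_{0}^{\infty}c=\Delta$), so $c(t,\infty)$ ranges between $\Delta$ and $\Delta-\ell^{+}=-\ell^{-}\le 0$ and up to $\Delta+\ell^{-}=\ell^{+}$... more carefully, its maximum over $t$ is $\max(0,\Delta)$, whence $c(t,\infty)\leq0$ for all $t$ iff $\Delta\leq0$.

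The only mild subtlety — and the one place I would be careful — is the degenerate geometry. If $t_{12}\wedge t_{21}>t_{12}\vee t_{21}$ is impossible, but the two support intervals can be adjacent or one can be empty; when $\ell^{+}=0$ the $\Case^{+}$ interval is empty and $c(0,t)\leq0$ everywhere, consistent with $\Delta=-\ell^{-}\leq0$, and symmetrically when $\ell^{-}=0$. One should also note the edge behavior $c(0,t)=0$ for $t\le t_{11}$ and $c(t,\infty)=0$ for $t\ge t_{22}$, which is why the inequalities in (i)--(ii) are the right "for all $t$" statements rather than something strict. I do not anticipate a genuine obstacle here: once Lemma~\ref{lem:In-any-SP2} pins down $c(\tau)$ completely, all four parts follow from tracking the running integral of a $\{-1,0,+1\}$-valued step function with known jump points, and the identity $\Delta=-t_{11}+t_{12}+t_{21}-t_{22}$ does all the translation into the stated condition.
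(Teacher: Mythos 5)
Your proof is correct and follows essentially the same route as the paper: both integrate the explicit piecewise-constant $c(\tau)$ from Lemma \ref{lem:In-any-SP2} (with the ordering from Lemma \ref{lem:In-any-SP}) and observe that the running integral rises from $0$, plateaus, and descends to $-t_{11}+t_{12}+t_{21}-t_{22}$, so that the sign conditions reduce to the sign of that quantity. The only difference is presentational — you handle $c(t,\infty)$ via the identity $c(t,\infty)=\Delta-c(0,t)$ and the total-area computation, whereas the paper writes out a second explicit five-case formula for $\int_{t}^{u}c(\tau)\,d\tau$; the substance is identical.
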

\begin{proof}
Without loss of generality, put $t_{12}\leq t_{21}$. We have
\[
c\left(0,t\right)=\begin{cases}
0 & \textnormal{if }t<t_{11}\\
\left(t-t_{11}\right) & \textnormal{if }t_{11}\leq t<t_{12}\\
\left(t-t_{11}\right)-\left(t-t_{12}\right)=t_{12}-t_{11} & \textnormal{if }t_{12}\leq t<t_{21}\\
\left(t-t_{11}\right)-\left(t-t_{12}\right)-\left(t-t_{21}\right)\\
\qquad\qquad\qquad=-t_{11}+t_{12}+t_{21}-t & \textnormal{if }t_{21}\leq t<t_{22}\\
\left(t-t_{11}\right)-\left(t-t_{12}\right)-\left(t-t_{21}\right)+\left(t-t_{22}\right)\\
\qquad\qquad\qquad=-t_{11}+t_{12}+t_{21}-t_{22} & \textnormal{if }t\geq t_{22}
\end{cases}
\]
The expressions for the first three cases are obviously nonnegative.
If $-t_{11}+t_{12}+t_{21}-t_{22}\geq0$, then $c\left(0,t\right)\geq0$
for all $t$ in the last case ($t\geq t_{22}$). With $-t_{11}+t_{12}+t_{21}-t_{22}\geq0$,
we have $-t_{11}+t_{12}+t_{21}-t\geq t_{22}-t\geq0$ for the fourth
case ($t_{21}\leq t<t_{22}$). Hence $c\left(0,t\right)\geq0$ for
all $t$ if $-t_{11}+t_{12}+t_{21}-t_{22}\geq0$. Conversely, if $c\left(0,t\right)\geq0$
for all $t$, then it is also true for $t\geq t_{22}$, whence $-t_{11}+t_{12}+t_{21}-t_{22}\geq0$.

The proof for $c\left(t,\infty\right)=\int_{t}^{\infty}c\left(\tau\right)d\tau$
requires replacing it first with $\int_{t}^{u}c\left(\tau\right)d\tau\leq0$
for some $u>t_{22}$. We have 
\[
\int_{t}^{u}c\left(\tau\right)d\tau=\begin{cases}
\left(u-t_{11}\right)-\left(u-t_{12}\right)-\left(u-t_{21}\right)+\left(u-t_{22}\right)\\
\qquad\qquad\qquad=-t_{11}+t_{12}+t_{21}-t_{22} & \textnormal{if }t<t_{11}\\
\left(u-t\right)-\left(u-t_{12}\right)-\left(u-t_{21}\right)+\left(u-t_{22}\right)\\
\qquad\qquad\qquad=-t+t_{12}+t_{21}-t_{22} & \textnormal{if }t_{11}\leq t<t_{12}\\
\left(u-t\right)-\left(u-t\right)-\left(u-t_{21}\right)+\left(u-t_{22}\right)\\
\qquad\qquad\qquad=t_{21}-t_{22} & \textnormal{if }t_{12}\leq t<t_{21}\\
\left(u-t\right)-\left(u-t\right)-\left(u-t\right)+\left(u-t_{22}\right)\\
\qquad\qquad\qquad=t-t_{22} & \textnormal{if }t_{21}\leq t<t_{22}\\
\left(u-t\right)-\left(u-t\right)-\left(u-t\right)+\left(u-t\right)\\
\qquad\qquad\qquad=0 & \textnormal{if }t\geq t_{22}
\end{cases}
\]
The expressions for the last three cases are obviously nonpositive.
If $-t_{11}+t_{12}+t_{21}-t_{22}\leq0$, then $\int_{t}^{u}c^{(2)}\left(\tau\right)d\tau\leq0$
for all $t$ in the first case ($t<t_{11}$). With $-t_{11}+t_{12}+t_{21}-t_{22}\leq0$,
we have $-t+t_{12}+t_{21}-t_{22}\leq t_{11}-t<0$ for the second case
($t_{11}\leq t<t_{12}$). Hence $\int_{t}^{u}c\left(\tau\right)d\tau\leq0$
for all $t$ if $-t_{11}+t_{12}+t_{21}-t_{22}\leq0$ . Since in all
expressions $u$ is algebraically eliminated, they remain unchanged
as $u\rightarrow\infty$. Conversely, if $c\left(t,\infty\right)\leq0$
for all $t$, then it is also true for $t<t_{11}$, whence $-t_{11}+t_{12}+t_{21}-t_{22}\leq0$.

The statements (iii) and (iv) follow trivially.
\end{proof}

\subsection{Parallel Processes}

\subsubsection{Simple Parallel Architectures of Size 2}

A simple parallel architecture corresponds to one of the two compositions:
$T=A\wedge B$ or $T=A\vee B$, with $(A,B)\looparrowleft(\alpha,\beta)$.
Recall the definition of $C\left(t\right)$ in (\ref{eq:C(t)}).
\begin{thm}
\label{thm:Basis} For $T=A\wedge B$, we have $c\left(t\right)\leq0$
for any $r,t$; consequently, $C\left(t\right)\leq0$ for any $t$.
For $T=A\vee B$, we have $c\left(t\right)\geq0$ for any $r,t$;
consequently, $C\left(t\right)\geq0$ for any $t$. \end{thm}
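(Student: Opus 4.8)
The plan is to reduce the whole statement to Lemma \ref{lem:In-any-SP2-1}, which already characterizes, purely in terms of the conditional overall durations $t_{ijr}$, when $c_r(t)$ is sign-definite in $t$. So after fixing $r$ and dropping it per the Notation Convention, everything comes down to one short computation with the Prolongation Assumption, done once and then dualized.

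First I would treat $T=A\wedge B$. Here the four conditional overall durations are $t_{ij}=a_i\wedge b_j$ for $i,j\in\{1,2\}$, so $t_{12}\wedge t_{21}=(a_1\wedge b_2)\wedge(a_2\wedge b_1)=(a_1\wedge a_2)\wedge(b_1\wedge b_2)$. By the Prolongation Assumption (\ref{eq:prolongation 2x2}), $a_1\le a_2$ and $b_1\le b_2$, hence $a_1\wedge a_2=a_1$ and $b_1\wedge b_2=b_1$, giving $t_{12}\wedge t_{21}=a_1\wedge b_1=t_{11}$. By the first part of Lemma \ref{lem:In-any-SP2-1} this yields $c(t)=c_r(t)\le 0$ for all $t$. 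The case $T=A\vee B$ is the exact dual: now $t_{ij}=a_i\vee b_j$, and $t_{12}\vee t_{21}=(a_1\vee a_2)\vee(b_1\vee b_2)=a_2\vee b_2=t_{22}$, again by Prolongation, so the second part of Lemma \ref{lem:In-any-SP2-1} gives $c_r(t)\ge 0$ for all $t$.

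To pass from the conditional contrasts to the observable one, I would invoke the representation (\ref{eq:basic integral}), namely $C(t)=\int_{\mathcal R}c_r(t)\,\dd\mu_r$. Since the integrand is $\le 0$ (respectively $\ge 0$) pointwise in $r$ and $\mu_r$ is a probability (in particular nonnegative) measure, the integral inherits the sign, so $C(t)\le 0$ (respectively $C(t)\ge 0$) for every $t$. This is exactly the two consequences claimed in the theorem.

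There is no serious obstacle here: the real work was front-loaded into Lemmas \ref{lem:In-any-SP} through \ref{lem:In-any-SP3-1}. The only point that genuinely needs attention is applying Prolongation in the correct direction --- that the ordering $a_1\le a_2$, $b_1\le b_2$ (together with convention (\ref{eq:default 2x2})) is what collapses $a_1\wedge a_2$ down to $a_1$ in the $\wedge$ case and lifts $a_1\vee a_2$ up to $a_2$ in the $\vee$ case --- and noting that the $\wedge\leftrightarrow\vee$, $\le\leftrightarrow\ge$ symmetry makes the second case automatic once the first is settled.
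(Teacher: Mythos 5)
Your proof is correct and follows essentially the same route as the paper's: compute $t_{12}\wedge t_{21}=t_{11}$ (resp.\ $t_{12}\vee t_{21}=t_{22}$) from the Prolongation Assumption, apply Lemma \ref{lem:In-any-SP2-1}, and integrate via (\ref{eq:basic integral}). The only cosmetic difference is that you keep $t_{11}=a_1\wedge b_1$ without invoking convention (\ref{eq:default 2x2}) to reduce it to $a_1$, which if anything is slightly cleaner.
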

\begin{proof}
For $T=A\wedge B$ with the Prolongation Assumption (\ref{eq:prolongation 2x2})-(\ref{eq:default 2x2}),
we have 
\[
t_{11}=a_{1}\wedge b_{1}=a_{1},\:t_{12}=a_{1}\wedge b_{2},\:t_{21}=a_{2}\wedge b_{1}.
\]
It follows that 
\[
t_{12}\wedge t_{21}=a_{1}\wedge b_{2}\wedge a_{2}\wedge b_{1}=a_{1}=t_{11}.
\]
By Lemma \ref{lem:In-any-SP2-1}, $c\left(t\right)\leq0$. As $C\left(t\right)$
in (\ref{eq:basic integral}) preserves the sign of $c\left(t\right)$,
we have $C\left(t\right)\leq0$. For $T=A\wedge B,$ we have 
\[
t_{22}=a_{2}\vee b_{2},\:t_{12}=a_{1}\vee b_{2},\:t_{21}=a_{2}\vee b_{1}.
\]
It follows that 
\[
t_{12}\vee t_{21}=a_{1}\vee b_{2}\vee a_{2}\vee b_{1}=t_{22},
\]
whence, by Lemma \ref{lem:In-any-SP2-1}, $c\left(t\right)\geq0$
and therefore $C\left(t\right)\geq0$. 
\end{proof}

\subsubsection{Two Parallel Processes in an Arbitrary $\SP$ Network}

Consider now a composition $\SP(A,B,\ldots)$ with $(A,B,\ldots)\looparrowleft(\alpha,\beta,\emptyset)$. 
\begin{lem}
\label{lem:parallel decomposition 2}If $A,B$ in $\SP(A,B,\ldots)$
are parallel, then $\SP(A,B,\ldots)$ can be presented as $A'\wedge B'$
if they are min-parallel, or as $A'\vee B'$ if they are max-parallel,
so that $(A',B')\looparrowleft(\alpha,\beta)$ and, for any fixed
$R=r$, the Prolongation Assumption holds.\end{lem}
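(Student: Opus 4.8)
The plan is to induct on the number of composition operations in $\SP(A,B,\ldots)$, using its top-level decomposition, and to push $A$ and $B$ up through the composition while maintaining a $\wedge$-structure between them in the min-parallel case (the max-parallel case being entirely dual). Throughout I keep the single common random entity $R$ supplied by $(A,B,\ldots)\looparrowleft(\alpha,\beta,\emptyset)$, under which the Prolongation Assumption~(\ref{eq:prolongation 2x2})--(\ref{eq:default 2x2}) is assumed to hold.

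Take the min-parallel case. If $\SP(A,B,\ldots)$ is a single duration there is nothing to prove (the hypothesis cannot hold), so write it as $X\circ Y$ with $X,Y$ disjoint $\SP$ subcompositions and $\circ\in\{\wedge,\vee,+\}$. By Definition~\ref{def:parallel-sequential} there is a subcomposition of the form $\SP^{1}(A,\ldots)\wedge\SP^{2}(B,\ldots)$; it is either all of $\SP(A,B,\ldots)$, in which case $\circ=\wedge$ and, renaming the two sides if needed, $A$ lies in $X$ and $B$ in $Y$, so we may take $A'=X$ and $B'=Y$; or it is contained in one of $X,Y$, say in $X$, in which case both $A$ and $B$ lie in $X$, they are min-parallel \emph{within} $X$, and $X$ inherits the selective-influence pattern, so the inductive hypothesis gives $X=A''\wedge B''$ with $(A'',B'')\looparrowleft(\alpha,\beta)$ and the Prolongation Assumption holding at every $r$. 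It then remains to absorb $Y$.

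Because $Y$ is disjoint from the subcomposition carrying both $A$ and $B$, it contains neither of them; since $A$ and $B$ are the only non-dummy-influenced components, every component of $Y$ is influenced by the empty set. Now I would invoke the elementary identities, valid for all real $p,q,w$,
\[
\min(p,q)\wedge w=(p\wedge w)\wedge q,\qquad \min(p,q)\vee w=(p\vee w)\wedge(q\vee w),\qquad \min(p,q)+w=(p+w)\wedge(q+w),
\]
to rewrite $X\circ Y=(A''\wedge B'')\circ Y$ in the form $A'\wedge B'$: for $\circ=\wedge$ take $A'=A''\wedge Y$, $B'=B''$; for $\circ=\vee$ take $A'=A''\vee Y$, $B'=B''\vee Y$; for $\circ=+$ take $A'=A''+Y$, $B'=B''+Y$. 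In the latter two cases $Y$ is duplicated, so $A'$ and $B'$ share arguments; this is exactly the situation sanctioned by the Remark following Definition~\ref{def:SP}, and it is harmless because the present framework allows $A'$ and $B'$ to be stochastically dependent. Since the three identities hold for each value $R=r$, they hold as identities between random variables, giving $\SP(A,B,\ldots)=A'\wedge B'$. For the max-parallel case one replaces $\wedge$ by $\vee$ throughout and uses the dual identities $\max(p,q)\wedge w=(p\wedge w)\vee(q\wedge w)$ and $\max(p,q)+w=(p+w)\vee(q+w)$.

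To finish I would check the two claimed properties. Each of $A',B'$ is a fixed $\SP$ function of, respectively, $A$ together with some empty-set-influenced components, or $B$ together with some empty-set-influenced components; all of these are functions of $R$ alone, so $A'=a'(\alpha,R)$ and $B'=b'(\beta,R)$ for suitable $a',b'$, whence $(A',B')\looparrowleft(\alpha,\beta)$. For the Prolongation Assumption, fix $R=r$: then $a'_{ir}$ is obtained from $a_{ir}$ and from treatment-independent numbers (the values of the empty-set-influenced components) by a composition nondecreasing in each argument, and only the $A$-slot changes between $\alpha=1$ and $\alpha=2$; since $a_{1r}\le a_{2r}$, monotonicity gives $a'_{1r}\le a'_{2r}$, and symmetrically $b'_{1r}\le b'_{2r}$, with the convention~(\ref{eq:default 2x2}) imposed afterwards by renaming $A'\leftrightarrow B'$ if necessary. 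The one place that genuinely needs care is the bookkeeping for the duplicated component $Y$ in the $\vee$ and $+$ cases: one must retain the \emph{same} copy of $Y$, as a function of $R$, in both $A'$ and $B'$, since it is this that keeps the distributive identities valid pointwise in $r$ and hence legitimizes both the passage to random variables and the fixed-$r$ Prolongation check.
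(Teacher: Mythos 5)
Your proof is correct, and its engine is the same as the paper's: rewrite the composition so that everything outside the subcomposition $\SP^{1}(A,\ldots)\wedge\SP^{2}(B,\ldots)$ is distributed over that central $\wedge$ (duplicating the shared components, which is harmless for the reasons given in the Remark after Definition~\ref{def:SP}), and then read off $(A',B')\looparrowleft(\alpha,\beta)$ and the Prolongation Assumption from the monotonicity of $\SP$ compositions in each argument. The difference is in how the ``everything outside'' is organized. The paper enumerates three canonical presentations, with only $+$ and $\wedge$ appearing above the critical subcomposition and with at most two levels of nesting, and tacitly leaves the reader to iterate; indeed its proof is written for $\SP_{\wedge}(A,B,\ldots)$ even though the lemma (and Theorem~\ref{thm:2x2 general}, which relies on it) is stated for arbitrary, possibly non-homogeneous $\SP$ compositions. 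Your structural induction on the parse tree handles arbitrary nesting depth uniformly and, via the lattice distributivity $\min(p,q)\vee w=(p\vee w)\wedge(q\vee w)$, explicitly covers an outer $\vee$, so it establishes the lemma in the full generality in which it is stated. One small point worth making explicit: since subcompositions are defined relative to a (non-unique) build-up sequence, you should fix the parse tree that witnesses the min-parallelism of $A$ and $B$ and run the induction on that tree, so that the witnessing subcomposition is guaranteed to be either the root or contained in one of the two top-level children; with that understood, your case split is airtight.
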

\begin{proof}
By Definitions \ref{def:SP} and \ref{def:parallel-sequential}, if
$A,B$ are min-parallel, then $\SP_{\wedge}(A,B,\ldots)$ can be presented
either as
\[
\SP^{1}(A,\ldots)\wedge\SP^{2}(B,\ldots)
\]
or

\[
\left(\SP^{1}(A,\ldots)\wedge\SP^{2}(B,\ldots)+X\right)\wedge Y,
\]
or else
\[
\left(\SP^{1}(A,\ldots)\wedge\SP^{2}(B,\ldots)\wedge X\right)+Y,
\]
where $B$ does not enter in $\SP^{1}$ and $A$ does not enter in
$\SP^{2}$. On renaming 
\[
\underset{=A'}{\underbrace{\SP^{1}(A,\ldots)}}\wedge\underset{=B'}{\underbrace{\SP^{2}(B,\ldots)}},
\]
\[
\left(\SP^{1}(A,\ldots)\wedge\SP^{2}(B,\ldots)+X\right)\wedge Y=\underset{=A'}{\underbrace{\left(\SP^{1}(A,\ldots)+X\right)}}\wedge\underset{=B'}{\underbrace{\left(\SP^{2}(B,\ldots)+X\right)\wedge Y}},
\]
and
\[
\left(\SP^{1}(A,\ldots)\wedge\SP^{2}(B,\ldots)\wedge X\right)+Y=\underset{=A'}{\underbrace{\left(\SP^{1}(A,\ldots)+Y\right)}}\wedge\underset{=B'}{\underbrace{\left(\SP^{2}(B,\ldots)\wedge X+Y\right)}},
\]
we have, obviously, $(A',B')\looparrowleft(\alpha,\beta)$. Fixing
$R=r$, by the (nonstrict) monotonicity of $\SP$ compositions, 
\[
a'_{1}=\SP{}^{1}(a_{1},\ldots)\leq\SP{}^{1}(a_{2},\ldots)=a'_{2}
\]
and 
\[
b'_{1}=\SP{}^{2}(b_{1},\ldots)\leq\SP{}^{2}(b_{2},\ldots)=b'_{2}
\]
We can also put $a'_{1}=\SP{}^{1}(a_{1},\ldots)$ $\leq$ $\SP{}^{2}(b_{1},\ldots)=b'_{1}$
(otherwise we can rename the variables). The proof for the max-parallel
case is analogous.\end{proof}
\begin{thm}
\label{thm:2x2 general}If $A,B$ in $\SP(A,B,\ldots)$ are min-parallel,
then $c\left(t\right)\leq0$ for any $r,t$; consequently, $C\left(t\right)\leq0$
for any $t$. If $A,B$ in $\SP(A,B,\ldots)$ are max-parallel, then
$c\left(t\right)\geq0$ for any $r,t$; consequently, $C\left(t\right)\geq0$
for any $t$.\end{thm}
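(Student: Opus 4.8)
The plan is to reduce the general min-parallel case to the simple parallel architecture $T=A'\wedge B'$ already handled in Theorem~\ref{thm:Basis}, using the decomposition supplied by Lemma~\ref{lem:parallel decomposition 2}. Concretely, when $A,B$ are min-parallel inside $\SP(A,B,\ldots)$, Lemma~\ref{lem:parallel decomposition 2} rewrites the whole composition as $A'\wedge B'$ with $(A',B')\looparrowleft(\alpha,\beta)$, and with the property that for every fixed $R=r$ the Prolongation Assumption (\ref{eq:prolongation 2x2})--(\ref{eq:default 2x2}) holds for the numbers $a'_{ir},b'_{jr}$. So the first step is simply to invoke that lemma.

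The crux of the argument is the observation that this rewriting is a purely algebraic regrouping of one and the same $\SP$ composition, hence it does not alter the overall duration: for every fixed $R=r$ and every treatment $(i,j)$ the number $t_{ijr}=T(a_{ir},b_{jr},\ldots)$ equals $a'_{ir}\wedge b'_{jr}$. Consequently the conditional interaction contrast $c_r(t)$, which by definition (\ref{eq:HIC}) is built only from the Heaviside distribution functions $h_{ijr}$ of the numbers $t_{ijr}$, is literally the same object whether we regard the network as $\SP(A,B,\ldots)$ or as the simple architecture $A'\wedge B'$. I would state this coincidence explicitly, since Lemmas~\ref{lem:In-any-SP2} and \ref{lem:In-any-SP2-1} (and Theorem~\ref{thm:Basis}) speak about those specific numbers.

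With that in hand the conclusion is immediate: apply Theorem~\ref{thm:Basis} to $A'\wedge B'$ to get $c_r(t)\le 0$ for every $r$ and every $t$, and then integrate over $\mathcal{R}$ against $\mu_r$; by (\ref{eq:basic integral}), $C(t)=\int_{\mathcal{R}}c_r(t)\,\dd\mu_r\le 0$ for every $t$, because the integrand is everywhere nonpositive. The max-parallel case is verbatim the same with $\wedge$ replaced by $\vee$, using the max-parallel halves of Lemma~\ref{lem:parallel decomposition 2} and of Theorem~\ref{thm:Basis}, yielding $c_r(t)\ge 0$ and hence $C(t)\ge 0$.

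I do not expect a genuine obstacle here: essentially all of the work has been front-loaded into Lemma~\ref{lem:parallel decomposition 2}. The only point that deserves a sentence of care is precisely the claim flagged above --- that ``renaming'' in Lemma~\ref{lem:parallel decomposition 2} is an identity of compositions, not merely of distributions --- so that the values $t_{ijr}$ are unchanged and the earlier lemmas apply without modification. Once that is recorded, the proof is a two-line invocation of prior results.
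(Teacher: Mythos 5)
Your argument is exactly the paper's: the authors also prove this theorem by combining Lemma~\ref{lem:parallel decomposition 2} with Theorem~\ref{thm:Basis} and then integrating over $\mathcal{R}$. Your added remark that the rewriting is an identity of compositions (so the values $t_{ijr}$ are unchanged) is a correct and worthwhile clarification, but it does not change the route.
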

\begin{proof}
Immediately follows from Lemma \ref{lem:parallel decomposition 2}
and Theorem \ref{thm:Basis}.
\end{proof}

\subsection{Sequential Processes}

\subsubsection{Simple Serial Architectures of Size 2}

Simple serial architectures of size 2 corresponds to the $\SP$ composition
$T=A+B$, with $(A,B)\looparrowleft(\alpha,\beta)$. Recall the definitions
of the two cumulative interaction contrasts: (\ref{eq:cum2(0,t)})-(\ref{eq:cum2(t,infty)})
and (\ref{eq:Cum2(0,t)})-(\ref{eq:Cum2(t,infty)}). 
\begin{thm}
\label{thm:simple2plus} If $T=A+B$, then $c\left(0,t\right)\geq0$
and $c\left(t,\infty\right)\leq0$ for any $r,t$; moreover, 
\[
\lim_{t\rightarrow\infty}c\left(0,t\right)=\lim_{t\rightarrow0}c\left(t,\infty\right)=0,
\]
 for any $r,t$. Consequently, $C\left(0,t\right)\geq0$, $C\left(t,\infty\right)\leq0$
for any $t$, and 
\[
\lim_{t\rightarrow\infty}C\left(0,t\right)=\lim_{t\rightarrow0}C\left(t,\infty\right)=0
\]
. \end{thm}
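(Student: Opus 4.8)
The plan is to reduce every assertion to the single algebraic fact that, for a purely serial composition, the ``$t$-level'' contrast vanishes identically, and then to feed that into Lemma~\ref{lem:In-any-SP3-1}. Fix $R=r$ and drop the index $r$ in accordance with the Notation Convention. Since $T=A+B$ forces $t_{ij}=a_i+b_j$ for every $r$, we have
\[
-t_{11}+t_{12}+t_{21}-t_{22}=-(a_1+b_1)+(a_1+b_2)+(a_2+b_1)-(a_2+b_2)=0 .
\]
Applying Lemma~\ref{lem:In-any-SP3-1} with the equality $-t_{11}+t_{12}+t_{21}-t_{22}=0$: part~(i) gives $c(0,t)\ge0$ for all $t$, part~(ii) gives $c(t,\infty)\le0$ for all $t$, and parts~(iii)--(iv) give $\lim_{t\to\infty}c(0,t)=0$ and $\lim_{t\to0}c(t,\infty)=0$. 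Since $r$ was arbitrary, this settles every conditional claim of the theorem with no further work.

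For the observable contrasts, the sign statements fall out of the averaging formulas (\ref{eq:Cum2(0,t)})--(\ref{eq:Cum2(t,infty)}): $C(0,t)=\int_{\mathcal R}c_r(0,t)\,\dd\mu_r\ge0$ because the integrand is nonnegative for every $r$, and likewise $C(t,\infty)=\int_{\mathcal R}c_r(t,\infty)\,\dd\mu_r\le0$. What remains is to interchange the limit in $t$ with the integral over $\mathcal R$, i.e. to show $\lim_{t\to\infty}\int_{\mathcal R}c_r(0,t)\,\dd\mu_r=\int_{\mathcal R}\lim_{t\to\infty}c_r(0,t)\,\dd\mu_r=0$ and, symmetrically, $\lim_{t\to0}\int_{\mathcal R}c_r(t,\infty)\,\dd\mu_r=0$; equivalently, by (\ref{eq:Cum2(0,t)}), to show $\int_0^\infty C(\tau)\,\dd\tau=0$.

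This interchange is the one genuinely delicate step, and it is where I would concentrate the effort. Lemma~\ref{lem:In-any-SP} together with the case analysis inside the proof of Lemma~\ref{lem:In-any-SP3-1} gives the uniform-in-$t$ bounds
\[
0\le c_r(0,t)\le t_{12r}\wedge t_{21r}-t_{11r}\le t_{22r}-t_{11r},\qquad \bigl|c_r(t,\infty)\bigr|\le t_{22r}-t_{12r}\vee t_{21r}\le t_{22r}-t_{11r},
\]
and for $T=A+B$ with nonnegative durations the dominating quantity is $t_{22r}-t_{11r}=(a_{2r}-a_{1r})+(b_{2r}-b_{1r})$, which is $\mu_r$-integrable precisely when the observed times $T_{ij}$ have finite expectation. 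Under that hypothesis --- essentially automatic for response-time data, and which I would make explicit --- dominated convergence legitimises both interchanges, using the pointwise limits $c_r(0,t)\to0$ and $c_r(t,\infty)\to0$ from Lemma~\ref{lem:In-any-SP3-1}(iii)--(iv), and the conclusions $\lim_{t\to\infty}C(0,t)=\lim_{t\to0}C(t,\infty)=0$ follow. I would also flag that some integrability condition is not merely cosmetic: for a scale family such as $A=\alpha/R$, $B=\beta/R$ with $R$ uniform on $(0,1]$ one finds $C(\tau)\equiv0$ for $\tau\ge4$ yet $\int_0^\infty C(\tau)\,\dd\tau=\ln(1024/729)\neq0$, so the $C$-level limits can fail even though the $c$-level ones never do. Apart from pinning down this hypothesis, the proof is bookkeeping already done in Lemmas~\ref{lem:In-any-SP}--\ref{lem:In-any-SP3-1}.
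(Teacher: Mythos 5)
Your core argument coincides with the paper's: the published proof consists of exactly the computation $-t_{11}+t_{12}+t_{21}-t_{22}=-(a_{1}+b_{1})+(a_{1}+b_{2})+(a_{2}+b_{1})-(a_{2}+b_{2})=0$ followed by an appeal to Lemma \ref{lem:In-any-SP3-1}, with the $C$-level sign statements obtained by integrating the conditional ones over $\mathcal{R}$. Where you depart from the paper is in refusing to pass from $\lim_{t}c_{r}=0$ to $\lim_{t}C=0$ for free, and you are right to balk: the interchange of the limit in $t$ with the integral over $\mathcal{R}$ is not justified by the bounded-integrand Fubini argument the paper uses for finite intervals, and your counterexample is correct. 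With $A=\alpha/R$, $B=\beta/R$, $R$ uniform on $(0,1]$, the function $c_{r}(0,t)$ is a tent of height $1/r$ supported on $(2/r,4/r)$, and for every $t\geq 4$ one gets $C(0,t)=\int_{2/t}^{3/t}(t-2/r)\,\dd r+\int_{3/t}^{4/t}(4/r-t)\,\dd r=4\ln(4/3)-2\ln(3/2)=\ln(1024/729)>0$, so $\lim_{t\rightarrow\infty}C(0,t)\neq0$ even though $c_{r}(0,t)\rightarrow0$ for every $r$ and $C(\tau)=0$ for $\tau\geq4$. (The limit claims would be trivially true if $C(0,\infty)$ were \emph{defined} as $\int_{\mathcal{R}}c_{r}(0,\infty)\,\dd\mu_{r}$, but the theorem states them as limits of $C(0,t)$ and $C(t,\infty)$, which is what your example refutes.) Your repair --- assume finite expectations so that $\min(a_{2r}-a_{1r},\,b_{2r}-b_{1r})$, which dominates both $c_{r}(0,t)$ and $|c_{r}(t,\infty)|$ uniformly in $t$, is $\mu_{r}$-integrable, then apply dominated convergence --- is the natural one and is what the paper should have said. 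In short: your proof of the sign statements and of all the conditional ($r$-level) claims matches the paper's; your integrability hypothesis for the two $C$-level limit claims is a genuine and necessary addition that the paper omits.
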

\begin{proof}
Follows immediately from Lemma \ref{lem:In-any-SP3-1}, since
\[
-t_{11}+t_{12}+t_{21}-t_{22}=-\left(a_{1}+b_{1}\right)+\left(a_{1}+b_{2}\right)+\left(a_{2}+b_{1}\right)-\left(a_{2}+b_{2}\right)=0.
\]

\end{proof}

\subsubsection{Two Sequential Processes in an Arbitrary $\SP$ Network}

Consider now a composition $\SP(A,B,\ldots)$ with $(A,B,\ldots)\looparrowleft(\alpha,\beta,\emptyset)$. 
\begin{thm}
\label{thm:2x2 general-1-1}If $A$ and $B$ are sequential in an
$\SP(A,B,\ldots)$ composition, then one or both of the following
statements hold: 

(i) $c\left(0,t\right)\geq0$ for any $r,t$, and $C\left(0,t\right)\geq0$
for any $t$, 

(ii) $c\left(t,\infty\right)\leq0$ for any $r,t$, and $C\left(t,\infty\right)\leq0$
for any $t$.\end{thm}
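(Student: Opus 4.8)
The plan is to mimic the parallel case (Lemma~\ref{lem:parallel decomposition 2} and Theorem~\ref{thm:2x2 general}): isolate the serial subcomposition containing $A$ and $B$ as a single ``core'', and then read off the sign of $\Delta_{r}:=-t_{11r}+t_{12r}+t_{21r}-t_{22r}$. By Lemma~\ref{lem:In-any-SP3-1}, statement (i) holds iff $\Delta_{r}\ge0$ for every $r$ and statement (ii) holds iff $\Delta_{r}\le0$ for every $r$, and in each case the observable inequalities for $C(0,t)$ and $C(t,\infty)$ follow from their conditional counterparts by the same integration against $\mu$ as in (\ref{eq:Cum2(0,t)})--(\ref{eq:Cum2(t,infty)}). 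Thus the theorem is equivalent to the single statement: when $A,B$ are sequential, the family $\{\Delta_{r}\}_{r}$ does not change sign.

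First I would prove a decomposition lemma in the spirit of Lemma~\ref{lem:parallel decomposition 2}. Since $A,B$ are sequential, Definition~\ref{def:parallel-sequential} supplies a subcomposition $\SP^{1}(A,\ldots)+\SP^{2}(B,\ldots)$; put $U:=\SP^{1}(A,\ldots)$ and $V:=\SP^{2}(B,\ldots)$, so $(U,V)\looparrowleft(\alpha,\beta)$ and, for fixed $R=r$, monotonicity of $\SP$ compositions together with the Prolongation Assumption gives $u_{1}\le u_{2}$, $v_{1}\le v_{2}$ (renaming so that $u_{1}\le v_{1}$). The whole of $\SP(A,B,\ldots)$ is obtained from $U+V$ by repeatedly combining, via $+,\wedge,\vee$, with $\SP$-subcompositions free of $A$ and $B$; once $R=r$ is fixed these subcompositions are constants, so $\SP(A,B,\ldots)$ evaluated at $R=r$ equals $G_{r}(U+V)$, where $G_{r}$ is a finite composition of maps of the three forms $x\mapsto x+c$, $x\mapsto x\wedge c$, $x\mapsto x\vee c$; in particular every $G_{r}$ is nondecreasing.

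Next, writing $w_{ij}:=u_{i}+v_{j}$ we have $t_{ijr}=G_{r}(w_{ij})$, $w_{11}\le w_{12},w_{21}\le w_{22}$, and the additive identity
\[
w_{11}+w_{22}=(u_{1}+v_{1})+(u_{2}+v_{2})=(u_{1}+v_{2})+(u_{2}+v_{1})=w_{12}+w_{21},
\]
whence $w_{12}-w_{11}=w_{22}-w_{21}\ge0$. Assuming (by the symmetry of $\Delta_{r}$ in $t_{12},t_{21}$) that $w_{12}\le w_{21}$, the equal-length intervals $[w_{11},w_{12}]$ and $[w_{21},w_{22}]$ then lie one to the left of the other, so if $G_{r}$ is concave its increment over the right interval is no larger than over the left, i.e. $G_{r}(w_{12})-G_{r}(w_{11})\ge G_{r}(w_{22})-G_{r}(w_{21})$, which is exactly $\Delta_{r}\ge0$; if $G_{r}$ is convex, $\Delta_{r}\le0$. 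I would then case-split on the context. If only $+$-combinations occur above $U+V$, then $G_{r}$ is affine, $\Delta_{r}\equiv0$, and \emph{both} (i) and (ii) hold (recovering Theorem~\ref{thm:simple2plus}). If the context is of $\min$-type --- after absorbing leading $+$-combinations into $U$ and $V$ as in Lemma~\ref{lem:parallel decomposition 2}, an $\spand$-composition with $U+V$ as a leaf --- then $G_{r}$, being a composition of the concave nondecreasing maps $x\mapsto x\wedge c$ and $x\mapsto x+c$, is concave for every $r$, so $\Delta_{r}\ge0$ for every $r$ and Lemma~\ref{lem:In-any-SP3-1}(i) gives (i); symmetrically a $\max$-type ($\spor$) context makes $G_{r}$ convex and yields (ii).

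The step I expect to be the genuine obstacle is the last one: showing that a serial pair in an \emph{arbitrary} $\SP$ network is always enclosed in a context that is affine, concave, or convex, uniformly in $r$. In the parallel case this causes no trouble because $x\mapsto x+c$ distributes over $\wedge$ and $\vee$ --- precisely what lets Lemma~\ref{lem:parallel decomposition 2} collapse everything to $A'\wedge B'$ or $A'\vee B'$ --- whereas $x\mapsto x+c$ does not distribute over $\wedge$ or $\vee$, so in a non-homogeneous network the operations stacked above $U+V$ can interleave $\wedge$ and $\vee$ and make $G_{r}$ neither concave nor convex. The remaining work is to show that such interleaving can be regrouped into a homogeneous context around $U+V$, or else that its net effect on the four points $w_{ij}$ still has a sign that is the same for all $r$ --- and this is also where one determines which of (i), (ii) (or both) is being asserted. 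A clean sufficient setting in which the argument goes through verbatim is that the $\SP$-subcomposition immediately enclosing $U+V$ is homogeneous.
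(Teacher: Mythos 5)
Your reduction of the theorem to the sign of $\Delta_{r}=-t_{11r}+t_{12r}+t_{21r}-t_{22r}$ via Lemma \ref{lem:In-any-SP3-1}, and your isolation of a serial core $U+V$ wrapped in a nondecreasing context $G_{r}$, match the paper's strategy, and your concavity/convexity argument is a genuine improvement where it applies. The paper's own proof presents the composition as $\left(\SP^{1}(A,\ldots)+\SP^{2}(B,\ldots)\right)\wedge X+Y$ or $\left(\SP^{1}(A,\ldots)+\SP^{2}(B,\ldots)\right)\vee X+Y$, i.e., assumes $G_{r}(w)=(w\wedge t')+t''$ or $G_{r}(w)=(w\vee t')+t''$, and verifies $\Delta_{r}\geq0$ (resp.\ $\leq0$) by a five-case computation on the position of $t'$; since those two maps are respectively concave and convex, your equal-increments argument subsumes that computation and extends it to any homogeneous context. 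However, the step you defer --- an arbitrary, possibly $\wedge$/$\vee$-interleaved context --- is exactly the content of the theorem beyond Theorem \ref{thm:simple2plus} and the homogeneous case, so the proposal as written does not prove the stated theorem; it proves the weaker homogeneous-context version (essentially the paper's subsequent, more specific theorem for $\SP_{\wedge}$ and $\SP_{\vee}$).

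Moreover, your suspicion that the interleaved case is a genuine obstacle rather than a presentational one is correct, and no regrouping argument will rescue it. Take $T=\left(\left(A+B\right)\wedge X_{1}\right)\vee X_{2}$ with $(A,B,X_{1},X_{2})\looparrowleft(\alpha,\beta,\emptyset,\emptyset)$; then $G_{r}(w)=(w\wedge x_{1r})\vee x_{2r}$ is a clamp, neither concave nor convex. With $a_{1}=b_{1}=0$, $a_{2}=10$, $b_{2}=2$ at every $r$ (so the Prolongation Assumption holds and the four values $w_{11},w_{12},w_{21},w_{22}$ are $0,2,10,12$), a clamp window $[3,11]$ gives $t_{11},t_{12},t_{21},t_{22}=3,3,10,11$ and $\Delta_{r}=-1$, while a window $[1,5]$ gives $1,2,5,5$ and $\Delta_{r}=+1$. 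If $R$ takes both values, Lemma \ref{lem:In-any-SP3-1} shows that neither (i) nor (ii) holds, so the family $\{\Delta_{r}\}$ really can change sign for sequential $A,B$ in a non-homogeneous network. The paper's proof avoids this only because its claimed normal form (\ref{eq:former})/(\ref{eq:latter}) silently excludes compositions in which both $\wedge$ and $\vee$ occur above the serial core; that normal-form claim is the unproved (and, per the example, unprovable in general) step, and your proposal correctly localizes the difficulty there but does not resolve it.
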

\begin{proof}
In accordance with Definitions \ref{def:SP} and \ref{def:parallel-sequential},
$\SP(A,B,\ldots)$ with sequential $A,B$ can be presented as either
\begin{equation}
\left(\SP^{1}(A,\ldots)+\SP^{2}(B,\ldots)\right)\wedge X+Y\label{eq:former}
\end{equation}
or
\begin{equation}
\left(\SP^{1}(A,\ldots)+\SP^{2}(B,\ldots)\right)\vee X+Y\label{eq:latter}
\end{equation}
(note that any $Z$ in $\SP^{1}(A,\ldots)+\SP^{2}(B,\ldots)+Z$ can
be absorbed by either of the first two summands). For both cases,
by the monotonicity of $\SP$ compositions, for any $R=r$, $\SP^{1}(a_{1},\ldots)\leq\SP^{1}(a_{2},\ldots)$,
$\SP^{2}(b_{1},\ldots)\leq\SP^{2}(b_{2},\ldots)$, and we can always
assume $\SP^{1}(a_{1},\ldots)\leq\SP^{2}(b_{1},\ldots)$. Denoting
the durations of $\SP^{1}(a_{i},\ldots)+\SP^{2}(b_{j},\ldots)$ by
$t'_{ij}$, we have therefore, by Theorem \ref{thm:simple2plus},
$-t'_{11}+t'_{12}+t'_{21}-t'_{22}=0$. Denoting the durations of $X$
and $Y$ by $t'$ and $t''$, respectively, in the case (\ref{eq:former})
we have
\[
t_{ij}=t'_{ij}\wedge t'+t''.
\]
By Lemma \ref{lem:In-any-SP3-1}, all we have to show is that $-t{}_{11}+t{}_{12}+t{}_{21}-t{}_{22}\geq0$.
It is easy to see that $t''$ does not affect this linear combination,
and its value is (assuming $t'_{12}\leq t'_{21}$, without loss of
generality) 
\[
\begin{cases}
0 & \textnormal{if }t'<t'_{11}\\
-t'_{11}+t' & \textnormal{if }t'_{11}\leq t'<t'_{12}\\
-t'_{11}+t'_{12} & \textnormal{if }t'_{12}\leq t'<t'_{21}\\
-t'_{11}+t'_{12}+t'_{21}-t' & \textnormal{if }t'_{21}\leq t'<t'_{22}\\
-t'_{11}+t'_{12}+t'_{21}-t'_{22} & \textnormal{if }t'\geq t'_{22}.
\end{cases}
\]
The nonnegativity of the first three expressions is obvious, the fifth
one is zero, and the forth expression is larger than the fifth because
$t'<t_{22}'$. 

The proof for the case (\ref{eq:latter}) is analogous.
\end{proof}
If the $\SP$ composition with sequential $A,B$ is homogeneous (Definition
\ref{def:homogeneous-mixed}), the statement of theorem can be made
more specific. 
\begin{thm}
If $A$ and $B$ are sequential in an $\SP_{\wedge}(A,B,\ldots)$
composition, then $c\left(0,t\right)\geq0$ for any $r,t$, and $C\left(0,t\right)\geq0$
for any $t$; if the composition is $\SP_{\vee}(A,B,\ldots)$, then
$c\left(t,\infty\right)\leq0$ for any $r,t$, and $C\left(t,\infty\right)\leq0$
for any $t$.
\end{thm}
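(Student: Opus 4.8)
The plan is to reuse, almost verbatim, the structural decomposition and the sign computation already carried out in the proof of Theorem~\ref{thm:2x2 general-1-1}, and to observe that the homogeneity hypothesis forces exactly one of the two cases (\ref{eq:former})--(\ref{eq:latter}) to occur.

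First I would pin down the relevant normal form. By Definitions~\ref{def:SP} and~\ref{def:parallel-sequential}, an $\SP(A,B,\ldots)$ composition in which $A,B$ are sequential contains a subcomposition $\SP^{1}(A,\ldots)+\SP^{2}(B,\ldots)$, with $B$ absent from $\SP^{1}$ and $A$ absent from $\SP^{2}$; the whole composition is built from this subcomposition by repeatedly combining it, via $\wedge$, $\vee$, or $+$, with other subcompositions involving neither $A$ nor $B$. If the composition is $\SP_{\wedge}$, none of these combining steps uses $\vee$, so only $\wedge$ and $+$ occur. Consecutive $\wedge$-steps fold into a single $\wedge X$ by associativity of $\min$, and each intervening $+Y$ can be absorbed into the two serial summands and into $X$ using $(P+Q)\wedge X+Y=\bigl((P+Y)+Q\bigr)\wedge(X+Y)$; iterating, the composition takes the form $\bigl(\SP^{1}(A,\ldots)+\SP^{2}(B,\ldots)\bigr)\wedge X+Y$, i.e.\ exactly (\ref{eq:former}) (with $X$ possibly ``absent'', which is the pure-serial case $T=A+B$ already covered by Theorem~\ref{thm:simple2plus}). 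Dually, a $\SP_{\vee}$ composition with sequential $A,B$ reduces to (\ref{eq:latter}).

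With the form fixed, the rest is already available. In the proof of Theorem~\ref{thm:2x2 general-1-1}, case (\ref{eq:former}) was shown to yield $-t_{11}+t_{12}+t_{21}-t_{22}\geq0$ for every $r$; by Lemma~\ref{lem:In-any-SP3-1}(i) this is equivalent to $c\left(0,t\right)\geq0$ for all $t$, and then $C\left(0,t\right)=\int_{\mathcal{R}}c_{r}\left(0,t\right)\dd\mu_{r}\geq0$ by (\ref{eq:Cum2(0,t)}), the integrand being nonnegative. Symmetrically, case (\ref{eq:latter}) gives $-t_{11}+t_{12}+t_{21}-t_{22}\leq0$, so Lemma~\ref{lem:In-any-SP3-1}(ii) yields $c\left(t,\infty\right)\leq0$ for all $t$ and, via (\ref{eq:Cum2(t,infty)}), $C\left(t,\infty\right)\leq0$.

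The main obstacle is the normal-form reduction in the second paragraph: one has to make rigorous, by induction on the construction of the $\SP$ composition, that a homogeneous composition with a designated serial pair $A,B$ can always be collapsed to the single-$\wedge$ (resp.\ single-$\vee$) shape of (\ref{eq:former}) (resp.\ (\ref{eq:latter})), and in particular that the bookkeeping with the absorbed $+Y$ terms never disturbs the pattern of selective influences (the relocated summands stay influenced by $\emptyset$, while $\SP^{1}$ and $\SP^{2}$ retain their dependence on $\alpha$ and $\beta$). Once that is in hand, the inequalities on $-t_{11}+t_{12}+t_{21}-t_{22}$ and the passage to $c(0,t)$, $c(t,\infty)$, $C(0,t)$, $C(t,\infty)$ are immediate from Lemma~\ref{lem:In-any-SP3-1} and the Fubini identities (\ref{eq:Cum2(0,t)})--(\ref{eq:Cum2(t,infty)}).
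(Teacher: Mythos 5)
Your proposal is correct and follows essentially the same route as the paper, which in fact states this theorem without a separate proof, treating it as an immediate corollary of the case analysis in the proof of Theorem \ref{thm:2x2 general-1-1}: homogeneity rules out one of the two forms (\ref{eq:former}), (\ref{eq:latter}), and the surviving case already yields the required sign of $-t_{11}+t_{12}+t_{21}-t_{22}$ and hence, via Lemma \ref{lem:In-any-SP3-1} and the integral representations (\ref{eq:Cum2(0,t)})--(\ref{eq:Cum2(t,infty)}), the stated signs of $c(0,t)$, $C(0,t)$ or $c(t,\infty)$, $C(t,\infty)$. The only substantive addition in your write-up is the explicit normal-form induction showing that a homogeneous composition with a designated serial pair collapses to the single-$\wedge$ (resp.\ single-$\vee$) shape of (\ref{eq:former}) (resp.\ (\ref{eq:latter})), a step the paper leaves implicit.
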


\section{Multiple Processes}

We now turn to networks containing $n\geq2$ processes with durations
$(X^{1},\ldots,X^{n})$, selectively influenced by factors $(\lambda^{1},\ldots,\lambda^{n})$.
In other words, we deal with compositions $\SP(X^{1},\ldots,X^{n},\ldots)$
such that $(X^{1},\dots,X^{n},\ldots)\looparrowleft(\lambda^{1},\ldots,\lambda^{n},\emptyset)$,
where each $\lambda^{k}$ is binary, with values 1,2. There are $2^{n}$
allowable treatments and $2^{n}$ corresponding overall durations,
$T_{11\ldots1},T_{11\ldots2},\ldots,T_{22\ldots2}$. According to
Definition \ref{def:SI} of selective influences, each process duration
here is a function of some random variable $R$ and of the corresponding
factor, $X^{k}=x^{k}(R,\lambda^{k})$. For any fixed value $R=r$,
these durations are fixed numbers for any given treatment, and so
is the overall, observed value of the $\SP$ composition. We denote
them

\begin{equation}
x^{k}(r,\lambda^{k}=1)=x_{1r}^{k},\:x^{k}(r,\lambda^{k}=2)=x_{2r}^{k},\label{eq:function for n}
\end{equation}
and

\begin{equation}
T(x_{i_{1}r}^{1},x_{i_{2}r}^{2},\ldots,x_{i_{n}r}^{n},\ldots),\ldots=t_{i_{1}i_{2}...i_{n}r},
\end{equation}
where $i_{1},i_{2},\ldots,i_{n}\in\left\{ 1,2\right\} .$ The distribution
function for $t_{i_{1}i_{2}...i_{n}r}$ is a shifted Heaviside function
\begin{equation}
h_{i_{1}i_{2}...i_{n}r}\left(t\right)=\left\{ \begin{array}{c}
0,\:if\:t<t_{i_{1}i_{2}...i_{n}r}\\
1,\:if\:t\geq t_{i_{1}i_{2}...i_{n}r}
\end{array}.\right.
\end{equation}
Denoting by $H_{i_{1}i_{2}...i_{n}}\left(t\right)$ the distribution
function of $T_{i_{1}i_{2}\ldots i_{n}}$, we have
\begin{equation}
H_{i_{1}i_{2}...i_{n}}\left(t\right)=\int_{\mathcal{R}}h_{i_{1}i_{2}...i_{n}r}\left(t\right)\dd\mu_{r}.
\end{equation}

Conditioned on $R=r$, the $n$-\emph{th order interaction contrast}
is defined in terms of mixed finite differences as

\begin{equation}
c_{r}^{\left(n\right)}\left(t\right)=\Delta_{i_{1}}\Delta_{i_{2}}\ldots\Delta_{i_{n}}h_{i_{1}i_{2}...i_{n}r}\left(t\right),
\end{equation}
which, with some algebra can be shown to be equal to
\begin{equation}
c_{r}^{\left(n\right)}\left(t\right)=\underset{i_{1},i_{2},\ldots,i_{n}}{\sum}\left(-1\right)^{n+\sum_{k=1}^{n}i_{k}}h_{i_{1}\ldots i_{n}r}\left(t\right).
\end{equation}
Thus,
\begin{equation}
c_{r}^{\left(1\right)}\left(t\right)=\Delta_{i_{1}}h_{i_{1}r}\left(t\right)=h_{1r}\left(t\right)-h_{2r}\left(t\right)=\underset{i_{1}}{\sum}\left(-1\right)^{1+i_{1}}h_{i_{1}r}\left(t\right),
\end{equation}
\begin{equation}
\begin{array}{l}
c_{r}^{\left(2\right)}\left(t\right)=\Delta_{i_{1}}\Delta_{i_{2}}h_{i_{1}i_{2}r}\left(t\right)=\left[h_{11r}\left(t\right)-h_{12r}\left(t\right)\right]-\left[h_{21r}\left(t\right)-h_{22r}\left(t\right)\right]\\
=h_{11r}\left(t\right)-h_{12r}\left(t\right)-h_{21r}\left(t\right)+h_{22r}\left(t\right)=\underset{i_{1},i_{2}}{\sum}\left(-1\right)^{2+i_{1}+i_{2}}h_{i_{1}i_{2}r}\left(t\right),
\end{array}
\end{equation}
\begin{equation}
\begin{array}{l}
c_{r}^{\left(3\right)}\left(t\right)=\Delta_{i_{1}}\Delta_{i_{2}}\Delta_{i_{3}}h_{i_{1}i_{2}i_{3}r}\left(t\right)\\
=\left\{ \left[h_{111r}\left(t\right)-h_{112r}\left(t\right)\right]-\left[h_{121r}\left(t\right)-h_{122r}\left(t\right)\right]\right\} -\left\{ \left[h_{211r}\left(t\right)-h_{212r}\left(t\right)\right]-\left[h_{221r}\left(t\right)-h_{222r}\left(t\right)\right]\right\} \\
=h_{111r}\left(t\right)-h_{112r}\left(t\right)-h_{121r}\left(t\right)-h_{211r}\left(t\right)+h_{122r}\left(t\right)+h_{212r}\left(t\right)+h_{221r}\left(t\right)-h_{222r}\left(t\right)\\
=\underset{i_{1},i_{2},i_{3}}{\sum}\left(-1\right)^{3+i_{1}+i_{2}+i_{3}}h_{i_{1}i_{2}i_{3}r}\left(t\right),
\end{array}
\end{equation}
etc. The observable distribution function interaction contrast of
order $n$ is defined as 
\begin{equation}
C^{(n)}\left(t\right)=\int_{\mathcal{R}}c_{r}^{(n)}\left(t\right)\dd\mu_{r}.
\end{equation}
By straightforward calculus this can be written in extenso as 
\begin{equation}
C^{\left(n\right)}\left(t\right)=\underset{i_{1},i_{2},\ldots,i_{n}}{\sum}\left(-1\right)^{n+\sum_{k=1}^{n}i_{k}}H_{i_{1}\ldots i_{n}}\left(t\right),
\end{equation}
or, in terms of finite differences,

\begin{equation}
C^{\left(n\right)}\left(t\right)=\Delta_{i_{1}}\Delta_{i_{2}}\ldots\Delta_{i_{n}}H_{i_{1}i_{2}...i_{n}}\left(t\right).
\end{equation}
This is essentially the high-order interaction contrast used by Yang,
Fific, and Townsend (2013), the only difference being that they use
survivor functions $1-H\left(t\right)$ rather than the distribution
functions $H\left(t\right)$. We see that $c_{r}\left(t\right)$ and
$C\left(t\right)$ in the preceding analysis correspond to $c_{r}^{\left(2\right)}\left(t\right)$
and $C^{\left(2\right)}\left(t\right)$, respectively.

We also introduce \emph{n}-\emph{th order cumulative contrasts}. Conditioned
on $R=r$, we define

\begin{equation}
c_{r}^{\left[1\right]}\left(0,t\right)=c_{r}^{\left[1\right]}\left(t,\infty\right)=h_{1r}\left(t\right)-h_{2r}\left(t\right),
\end{equation}

\begin{equation}
c_{r}^{\left[2\right]}\left(0,t\right)=\int_{0}^{t}c_{r}^{\left(2\right)}\left(t_{1}\right)dt_{1},\quad c_{r}^{\left[2\right]}\left(t,\infty\right)=\int_{t}^{\infty}c_{r}^{\left(2\right)}\left(t_{1}\right)dt_{1},
\end{equation}

\begin{equation}
c_{r}^{\left[3\right]}\left(0,t\right)=\int_{0}^{t}\int_{0}^{t_{1}}c_{r}^{\left(3\right)}\left(t_{2}\right)dt_{2}dt_{1},\quad c_{r}^{\left[3\right]}\left(t,\infty\right)=\int_{t}^{\infty}\int_{t_{1}}^{\infty}c_{r}^{\left(3\right)}\left(t_{2}\right)dt_{2}dt_{1},
\end{equation}
etc. Generalizing,
\begin{equation}
c_{r}^{\left[n\right]}\left(0,t\right)=\int_{0}^{t}\left(\int_{0}^{t_{1}}\ldots\int_{0}^{t_{n-2}}c_{r}^{\left(n\right)}\left(t_{n-1}\right)dt_{n-1}\ldots dt_{2}\right)dt_{1},
\end{equation}

\begin{equation}
c_{r}^{\left[n\right]}\left(t,\infty\right)=\int_{t}^{\infty}\left(\int_{t_{1}}^{\infty}\ldots\int_{t_{n-2}}^{\infty}c_{r}^{\left(n\right)}\left(t_{n-1}\right)dt_{n-1}\ldots dt_{2}\right)dt_{1}.
\end{equation}
The corresponding unconditional cumulative contrasts of the \emph{n}-th
order are, as always, defined by integration of the conditional ones:
\begin{equation}
C^{\left[n\right]}\left(0,t\right)=\int_{\mathcal{R}}c_{r}^{\left[n\right]}\left(0,t\right)\dd\mu_{r}=\int_{0}^{t}\left(\int_{0}^{t_{1}}\ldots\int_{0}^{t_{n-2}}C^{\left(n\right)}\left(t_{n-1}\right)dt_{n-1}\ldots dt_{2}\right)dt_{1},
\end{equation}
\begin{equation}
C^{\left[n\right]}\left(t,\infty\right)=\int_{\mathcal{R}}c_{r}^{\left[n\right]}\left(t,\infty\right)\dd\mu_{r}=\int_{t}^{\infty}\left(\int_{t_{1}}^{\infty}\ldots\int_{t_{n-2}}^{\infty}C^{\left(n\right)}\left(t_{n-1}\right)dt_{n-1}\ldots dt_{2}\right)dt_{1}.
\end{equation}

In the proofs below we will make use of the recursive representation
of the conditional cumulative contrasts $c_{r}^{\left[n\right]}$.
It is verified by straightforward calculus. Denoting 
\begin{equation}
c_{i_{w}r}^{\left(n-1\right)}\left(t\right)=\underset{i_{1},...,i_{w-1},i_{w+1},...,i_{n}}{\sum}\left(-1\right)^{n-1-i_{w}+\sum_{k=1}^{n}i_{k}}h_{i_{1}...i_{w-1}i_{w}i_{w+1}...i_{n}r}\left(t\right),
\end{equation}
where $w\in\{1,\ldots,n\}$ and $i_{w}$ is fixed at 1 or 2, we have:

\begin{equation}
c_{r}^{\left[1\right]}\left(0,t\right)=c_{r}^{\left[1\right]}\left(t,\infty\right)=h_{1r}\left(t\right)-h_{2r}\left(t\right),
\end{equation}

\begin{align}
c_{r}^{\left[2\right]}\left(0,t\right) & =\int_{0}^{t}c_{r}^{\left(2\right)}\left(\tau\right)d\tau\nonumber \\
= & \int_{0}^{t}\left(h_{11r}\left(\tau\right)-h_{12r}\left(\tau\right)-h_{21r}\left(\tau\right)+h_{22r}\left(\tau\right)\right)d\tau\nonumber \\
= & \int_{0}^{t}\left[c_{i_{w}=1,r}^{(1)}\left(\tau\right)-c_{i_{w}=2,r}^{(1)}\left(\tau\right)\right]d\tau\\
= & \int_{0}^{t}c_{i_{w}=1,r}^{[1]}\left(0,\tau\right)d\tau-\int_{0}^{t}c_{i_{w}=2,r}^{[1]}\left(0,\tau\right)d\tau,\nonumber 
\end{align}
\begin{align}
c_{r}^{\left[2\right]}\left(t,\infty\right) & =\int_{t}^{\infty}c_{r}^{\left(2\right)}\left(\tau\right)d\tau\nonumber \\
= & \int_{t}^{\infty}\left(h_{11r}\left(\tau\right)-h_{12r}\left(\tau\right)-h_{21r}\left(\tau\right)+h_{22r}\left(\tau\right)\right)d\tau\\
= & \int_{t}^{\infty}\left[c_{i_{w}=1,r}^{(1)}\left(\tau\right)-c_{i_{w}=2,r}^{(1)}\left(\tau\right)\right]d\tau\nonumber \\
= & \int_{t}^{\infty}c_{i_{w}=1,r}^{[1]}\left(\tau,\infty\right)d\tau-\int_{t}^{\infty}c_{i_{w}=2,r}^{[1]}\left(\tau,\infty\right)d\tau,\nonumber 
\end{align}
\begin{align}
c_{r}^{\left[3\right]}\left(0,t\right) & =\int_{0}^{t}\int_{0}^{t_{1}}c_{r}^{\left(3\right)}\left(t_{2}\right)dt_{2}dt_{1}\nonumber \\
= & \int_{0}^{t}\int_{0}^{t_{1}}\left[c_{i_{w}=1,r}^{\left(2\right)}\left(t_{2}\right)-c_{i_{w}=2,r}^{\left(2\right)}\left(t_{2}\right)\right]dt_{2}dt_{1}\nonumber \\
= & \int_{0}^{t}\left[\int_{0}^{t_{1}}c_{i_{w}=1,r}^{\left(2\right)}\left(t_{2}\right)dt_{2}-\int_{0}^{t_{1}}c_{i_{w}=2,r}^{\left(2\right)}\left(t_{2}\right)dt_{2}\right]dt_{1}\\
= & \int_{0}^{t}c_{i_{w}=1,r}^{\left[2\right]}\left(0,\tau\right)d\tau-\int_{0}^{t}c_{i_{w}=2,r}^{\left[2\right]}\left(0,\tau\right)d\tau,\nonumber 
\end{align}
\begin{align}
c_{r}^{\left[3\right]}\left(t,\infty\right) & =\int_{t}^{\infty}\int_{t_{1}}^{\infty}c_{r}^{\left(3\right)}\left(t_{2}\right)dt_{2}dt_{1}\nonumber \\
= & \int_{t}^{\infty}\int_{t_{1}}^{\infty}\left[c_{i_{w}=1,r}^{\left(2\right)}\left(t_{2}\right)-c_{i_{w}=2,r}^{\left(2\right)}\left(t_{2}\right)\right]dt_{2}dt_{1}\nonumber \\
= & \int_{t}^{\infty}\left[\int_{t_{1}}^{\infty}c_{i_{w}=1,r}^{\left(2\right)}\left(t_{2}\right)dt_{2}-\int_{t_{1}}^{\infty}c_{i_{w}=2,r}^{\left(2\right)}\left(t_{2}\right)dt_{2}\right]dt_{1}\\
= & \int_{t}^{\infty}c_{i_{w}=1,r}^{\left[2\right]}\left(\tau,\infty\right)d\tau-\int_{t}^{\infty}c_{i_{w}=2,r}^{\left[2\right]}\left(\tau,\infty\right)d\tau,\nonumber 
\end{align}
and generally, for $n>1$,

\begin{equation}
c_{r}^{\left[n\right]}\left(0,t\right)=\int_{0}^{t}c_{i_{w}=1,r}^{\left[n-1\right]}\left(0,\tau\right)d\tau-\int_{0}^{t}c_{i_{w}=2,r}^{\left[n-1\right]}\left(0,\tau\right)d\tau,\label{eq:recursive1}
\end{equation}
\begin{equation}
c^{\left[n\right]}\left(t,\infty\right)=\int_{t}^{\infty}c_{i_{w}=1,r}^{\left[n-1\right]}\left(\tau,\infty\right)d\tau-\int_{t}^{\infty}c_{i_{w}=2,r}^{\left[n-1\right]}\left(\tau,\infty\right)d\tau.\label{eq:recursive2}
\end{equation}

Also we have, by substitution of variables under integral,

\begin{equation}
c_{i_{w}r}^{\left[n-1\right]}\left(0,t\right)=c_{r}^{\left[n-1\right]}\left(0,t-x_{i_{w}r}^{w}\right),
\end{equation}

\begin{equation}
c_{i_{wr}}^{\left[n-1\right]}\left(t,\infty\right)=c^{\left[n-1\right]}\left(t-x_{i_{w}r}^{w},\infty\right).
\end{equation}

The Prolongation Assumption generalizing (\ref{eq:prolongation 2x2})-(\ref{eq:default 2x2})
is formulated as follows. \medskip{}

\noindent\textbf{Prolongation Assumption.} $R$ and functions $x^{1},\ldots,x^{n}$
in (\ref{eq:function for n}) can be chosen so that $x_{1r}^{k}\leq x_{2r}^{k}$
for all $R=r$ and for all $k=1,\ldots,n$. Without loss of generality,
we can also assume $x_{1r}^{1}\leq x_{1r}^{2}\leq\ldots\leq x_{1r}^{n}$
(if not, rearrange $x_{1r}^{1},\ldots,x_{1r}^{n}$). 

\medskip{}

\noindent\textbf{Notation Convention.} As we did before for $n=2$,
when $r$ is fixed throughout a discussion, we omit this argument
and write $x_{i_{1}}^{1},\ldots,x_{i_{n}}^{n}$, $t_{i_{1}i_{2}...i_{n}}$,
$h_{i_{1}i_{2}...i_{n}}\left(t\right)$, $c^{(n)}(t)$ in place of
$x_{i_{1}r}^{1},\ldots,x_{i_{n}r}^{n}$, $t_{i_{1}i_{2}...i_{n}r}$,
$h_{i_{1}i_{2}...i_{n}r}\left(t\right)$, $c_{r}^{(n)}(t)$.

\subsection{Parallel Processes}

\subsubsection{Simple Parallel Architectures of Size $n$}
\begin{thm}
\label{thm:OR_n_simple}If $T=X^{1}\wedge\ldots\wedge X^{n}$, then
for any $r,t$, $c^{\left(n\right)}\left(t\right)\leq0$ if $n$ is
even and $c^{\left(n\right)}\left(t\right)\geq0$ if $n$ is odd.
Consequently, for any $t$, $C^{\left(n\right)}\left(t\right)\leq0$
if $n$ is even and $C^{\left(n\right)}\left(t\right)\geq0$ if $n$
is odd.\end{thm}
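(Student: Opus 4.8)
The plan is to fix $R=r$ and, following the Notation Convention, suppress the index $r$ throughout; the statement for $C^{(n)}$ will then follow at the end from $C^{(n)}(t)=\int_{\mathcal R}c^{(n)}_r(t)\,\dd\mu_r$, an operation that preserves the sign of the integrand. For the conditional contrast I would exploit the product structure of Heaviside functions that is peculiar to a pure $\min$-composition. Since here $t_{i_1\ldots i_n}=x^1_{i_1}\wedge\cdots\wedge x^n_{i_n}$, the value $h_{i_1\ldots i_n}(t)$ is $1$ precisely when $t\ge x^k_{i_k}$ for at least one $k$; hence, writing $s^k_j(t):=1-h(t-x^k_j)$ for the $\{0,1\}$-valued indicator of the event $t<x^k_j$, one has
\[
1-h_{i_1\ldots i_n}(t)=\prod_{k=1}^{n}s^k_{i_k}(t).
\]

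First I would record two elementary facts about the sign weights. The weight factorizes, $(-1)^{n+\sum_k i_k}=\prod_{k=1}^{n}(-1)^{1+i_k}$, and therefore $\sum_{i_1,\ldots,i_n}(-1)^{n+\sum_k i_k}=\prod_{k=1}^{n}\big((-1)^{2}+(-1)^{3}\big)=0$. Because of this cancellation, replacing $h_{i_1\ldots i_n}(t)$ by $-\big(1-h_{i_1\ldots i_n}(t)\big)$ inside the defining sum for $c^{(n)}(t)$ does not change its value, and the factorization above turns the sum into a product:
\[
c^{(n)}(t)=-\sum_{i_1,\ldots,i_n}(-1)^{n+\sum_k i_k}\prod_{k=1}^{n}s^k_{i_k}(t)=-\prod_{k=1}^{n}\big(s^k_{1}(t)-s^k_{2}(t)\big).
\]

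Next I would bring in the Prolongation Assumption $x^k_1\le x^k_2$ (only the per-factor part is needed). It gives $s^k_1(t)-s^k_2(t)=-\,\mathbf{1}_{k}(t)$, where $\mathbf{1}_{k}(t)\in\{0,1\}$ is the indicator of the half-open interval $[x^k_1,x^k_2)$. Consequently $\prod_{k=1}^{n}\big(s^k_1(t)-s^k_2(t)\big)=(-1)^{n}\prod_{k=1}^{n}\mathbf{1}_{k}(t)$, and since the product of indicators is nonnegative,
\[
c^{(n)}(t)=(-1)^{n+1}\prod_{k=1}^{n}\mathbf{1}_{k}(t),
\]
which is $\le 0$ when $n$ is even and $\ge 0$ when $n$ is odd. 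Integrating this conditional inequality against $\mu_r$ yields the corresponding inequality for $C^{(n)}(t)$, completing the proof.

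I do not expect a genuine obstacle: the whole argument reduces to the factorization $1-h_{i_1\ldots i_n}=\prod_k s^k_{i_k}$, which is special to $\min$, together with the cancellation $\sum_i(-1)^{n+\sum_k i_k}=0$. The only place that wants care is the sign accounting — one must check that the passage from $h$ to $-(1-h)$ is legitimate (it is, exactly because of that cancellation) and that each of the $n$ factors $s^k_1-s^k_2$ supplies one factor $-1$, so that the overall sign is $(-1)^{n+1}$ and not $(-1)^{n}$. An alternative, slightly longer route is induction on $n$ using the recursive representation (\ref{eq:recursive1}) and (\ref{eq:recursive2}) together with Theorem \ref{thm:Basis}; the direct product computation is both shorter and more transparent, and it shows in addition that $c^{(n)}(t)$ is supported exactly on the intersection $\bigcap_{k=1}^{n}[x^k_1,x^k_2)$ of the $n$ prolongation intervals.
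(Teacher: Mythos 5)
Your proof is correct, but it takes a genuinely different route from the paper's. The paper argues by induction on $n$: using the full Prolongation Assumption (including the ordering $x_{1}^{1}\leq x_{1}^{2}\leq\ldots\leq x_{1}^{n}$) it observes that $t_{1i_{2}\ldots i_{n}}=x_{1}^{1}$ for all $i_{2},\ldots,i_{n}$, so the sub-contrast $c_{i_{1}=1}^{(n-1)}(t)$ vanishes identically, whence $c^{(n)}(t)=-c_{i_{1}=2}^{(n-1)}(t)$ and the sign alternates with $n$ by the induction hypothesis. You instead compute $c^{(n)}(t)$ in closed form by exploiting the factorization $1-h_{i_{1}\ldots i_{n}}(t)=\prod_{k}s_{i_{k}}^{k}(t)$ special to $\min$, together with the cancellation $\sum_{i_{1},\ldots,i_{n}}(-1)^{n+\sum_{k}i_{k}}=0$ that licenses the passage from $h$ to $-(1-h)$; distributivity then collapses the alternating sum to $-\prod_{k}\bigl(s_{1}^{k}(t)-s_{2}^{k}(t)\bigr)=(-1)^{n+1}\prod_{k}\mathbf{1}_{[x_{1}^{k},x_{2}^{k})}(t)$. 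All the individual steps check out (I verified the $n=2$ case against the paper's arrangement (i) diagram). Your approach buys two things the paper's does not: an exact expression for $c^{(n)}(t)$ identifying its support as $\bigcap_{k}[x_{1}^{k},x_{2}^{k})$ rather than merely its sign, and a weaker hypothesis, since you use only the per-factor inequalities $x_{1}^{k}\leq x_{2}^{k}$ and not the cross-factor ordering of the level-1 values. What the paper's induction buys is uniformity: the same recursive template $c^{(n)}=c_{i_{w}=1}^{(n-1)}-c_{i_{w}=2}^{(n-1)}$ is reused for the $\vee$ case (Theorem \ref{thm:AND_n_simple}) and, in integrated form, for the serial case (Theorem \ref{thm:PLUS_n_simple}), where no product factorization is available.
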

\begin{proof}
By induction on $n$, the case $n=1$ being true by the Prolongation
Assumption: 
\[
c^{\left(1\right)}\left(t\right)=h_{1}\left(t\right)-h_{2}\left(t\right)\geq0.
\]
Let the statement of the theorem be true for $c^{(n-1)}(t)$ , with
$n-1\geq1$. By the Prolongation Assumption, 
\[
t_{1i_{2}...i_{n}}=x_{1}^{1}\wedge x_{i_{2}}^{2}\wedge\ldots\wedge x_{i_{n}}^{n}=x_{1}^{1},
\]
for any $i_{2}\ldots i_{n}$, whence

\[
h_{1i_{2}...i_{n}}\left(t\right)=\left\{ \begin{array}{c}
0,\:if\:t<x_{1}^{1}\\
1,\:if\:t\geq x_{1}^{1}
\end{array}.\right.
\]
Therefore $c_{i_{1}=1}^{\left(n-1\right)}\left(t\right)=0$, and,
applying the induction hypothesis to $c_{i_{1}=2}^{\left(n-1\right)}\left(t\right)$,

\begin{eqnarray*}
c^{\left(n\right)}\left(t\right) & = & c_{i_{1}=1}^{\left(n-1\right)}\left(t\right)-c_{i_{1}=2}^{\left(n-1\right)}\left(t\right)=-c_{i_{1}=2}^{\left(n-1\right)}\left(t\right)=\left\{ \begin{array}{c}
\leq0,\textnormal{ if }n\textnormal{ is even}\\
\geq0,\textnormal{ if }n\textnormal{ is odd}
\end{array}\right..
\end{eqnarray*}
That $C^{\left(n\right)}\left(t\right)\leq0$ if $n$ is even and
$C^{\left(n\right)}\left(t\right)\geq0$ if $n$ is odd follows by
the standard argument. \end{proof}
\begin{thm}
\label{thm:AND_n_simple}If $T=X^{1}\vee\ldots\vee X^{n}$, then for
any $r,t$, $c^{\left(n\right)}\left(t\right)\geq0$. Consequently,
for any $t$, $C^{\left(n\right)}\left(t\right)\geq0$. \end{thm}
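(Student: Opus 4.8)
The plan is to exploit the fact that the Heaviside distribution function of a \emph{maximum} factorizes as a product over the component Heaviside functions, which turns the $n$-th order interaction contrast into a product of $n$ first-order contrasts, each nonnegative by the Prolongation Assumption. First I would fix $R=r$ and drop $r$ from the notation per the Notation Convention. Since $T=X^{1}\vee\ldots\vee X^{n}$, for every treatment $t_{i_{1}\ldots i_{n}}=x_{i_{1}}^{1}\vee\ldots\vee x_{i_{n}}^{n}$, so $t\geq t_{i_{1}\ldots i_{n}}$ holds precisely when $t\geq x_{i_{k}}^{k}$ for every $k$. Writing $h_{i_{k}}^{k}(t)=h(t-x_{i_{k}}^{k})$ for the component Heaviside functions, this yields
\[
h_{i_{1}\ldots i_{n}}(t)=\prod_{k=1}^{n}h_{i_{k}}^{k}(t).
\]

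Next I would substitute this into the defining sum for $c^{(n)}(t)$ and use the elementary identity $(-1)^{n+\sum_{k=1}^{n}i_{k}}=\prod_{k=1}^{n}(-1)^{1+i_{k}}$ to distribute the product over the sum:
\[
c^{(n)}(t)=\sum_{i_{1},\ldots,i_{n}}(-1)^{n+\sum_{k}i_{k}}\prod_{k=1}^{n}h_{i_{k}}^{k}(t)=\prod_{k=1}^{n}\Bigl(\sum_{i_{k}\in\{1,2\}}(-1)^{1+i_{k}}h_{i_{k}}^{k}(t)\Bigr)=\prod_{k=1}^{n}\bigl(h_{1}^{k}(t)-h_{2}^{k}(t)\bigr).
\]
By the Prolongation Assumption $x_{1}^{k}\leq x_{2}^{k}$, hence $h_{1}^{k}(t)\geq h_{2}^{k}(t)$, so each of the $n$ factors is nonnegative and therefore $c^{(n)}(t)\geq0$ for every $r$ and $t$. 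Finally $C^{(n)}(t)=\int_{\mathcal{R}}c_{r}^{(n)}(t)\,\dd\mu_{r}\geq0$ because the integrand is nonnegative — the same standard argument used in the preceding theorems. If one prefers a proof structurally parallel to Theorem \ref{thm:OR_n_simple}, the same computation reorganizes as an induction on $n$: the base case $c^{(1)}(t)=h_{1}(t)-h_{2}(t)\geq0$ is the Prolongation Assumption, and in the inductive step the recursive representation gives $c_{i_{1}}^{(n-1)}(t)=h_{i_{1}}^{1}(t)\,\tilde{c}^{(n-1)}(t)$, where $\tilde{c}^{(n-1)}$ is the $(n-1)$-st order contrast for $X^{2}\vee\ldots\vee X^{n}$, so that $c^{(n)}(t)=\bigl(h_{1}^{1}(t)-h_{2}^{1}(t)\bigr)\,\tilde{c}^{(n-1)}(t)\geq0$ by the induction hypothesis and Prolongation.

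There is no genuine obstacle here beyond routine sign bookkeeping in the factorization step. The one point worth flagging explicitly is that, unlike the $\wedge$-case of Theorem \ref{thm:OR_n_simple}, this argument needs neither the ordering $x_{1}^{1}\leq\ldots\leq x_{1}^{n}$ nor any even/odd distinction: the max architecture always produces a nonnegative contrast. The reason the product structure is available for $\vee$ but not for $\wedge$ is that $h_{i_{1}\ldots i_{n}}(t)=1-\prod_{k}\bigl(1-h_{i_{k}}^{k}(t)\bigr)$ in the min case, which does not factor after taking finite differences, which is precisely why that case required the more delicate collapse-and-induction argument.
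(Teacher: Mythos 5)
Your proof is correct, but it takes a different route from the paper's. The paper argues by induction on $n$: letting $x_{2}^{m}=x_{2}^{1}\vee\ldots\vee x_{2}^{n}$, it observes that every $t_{i_{1}\ldots i_{m-1}2i_{m+1}\ldots i_{n}}$ equals $x_{2}^{m}$ regardless of the other indices, so the sub-contrast $c_{i_{m}=2}^{(n-1)}(t)$ vanishes identically and $c^{(n)}(t)=c_{i_{m}=1}^{(n-1)}(t)\geq0$ by the induction hypothesis. You instead use the product identity $h_{i_{1}\ldots i_{n}}(t)=\prod_{k}h_{i_{k}}^{k}(t)$ (valid because $t\geq\max_{k}x_{i_{k}}^{k}$ iff $t\geq x_{i_{k}}^{k}$ for all $k$) together with $(-1)^{n+\sum_{k}i_{k}}=\prod_{k}(-1)^{1+i_{k}}$ to obtain the closed form $c^{(n)}(t)=\prod_{k}\bigl(h_{1}^{k}(t)-h_{2}^{k}(t)\bigr)$, each factor nonnegative by Prolongation. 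Your argument buys more than the paper's: it identifies $c^{(n)}(t)$ explicitly as the indicator of the interval $[\,x_{1}^{1}\vee\ldots\vee x_{1}^{n},\;x_{2}^{1}\wedge\ldots\wedge x_{2}^{n})$ when that interval is nonempty, rather than merely establishing a sign; the paper's induction, on the other hand, is structurally parallel to its treatment of the $\wedge$ case and to the serial case, which is why the authors prefer it. One small inaccuracy in your closing remark: the $\wedge$ case \emph{does} factor after taking finite differences, since the constant $1$ in $h_{i_{1}\ldots i_{n}}(t)=1-\prod_{k}\bigl(1-h_{i_{k}}^{k}(t)\bigr)$ is annihilated by the alternating sum, leaving $c^{(n)}(t)=(-1)^{n+1}\prod_{k}\bigl(h_{1}^{k}(t)-h_{2}^{k}(t)\bigr)$ --- which in fact reproves Theorem \ref{thm:OR_n_simple}, including its even/odd sign alternation, by the same one-line computation. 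This does not affect the validity of your proof of the present theorem.
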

\begin{proof}
By induction on $n$, the case $n=1$ being true by the Prolongation
Assumption: 
\[
c^{\left(1\right)}\left(t\right)=h_{1}\left(t\right)-h_{2}\left(t\right)\geq0.
\]
Let the theorem be true for $c^{(n-1)}(t)$, where $n-1\geq1$. Let
\[
x_{2}^{1}\vee x_{2}^{2}\vee\ldots\vee x_{2}^{n}=x_{2}^{m},
\]
where $1\leq m\leq n$. We have then 
\[
t_{i_{1}i_{2}...i_{m-1}2i_{m+1}...i_{n}}=x_{2}^{m},
\]
 and 
\[
h_{i_{1}...i_{m-1}2i_{m+1}...i_{n}}(t)=\left\{ \begin{array}{c}
0,\:if\:t<x_{2}^{m}\\
1,\:if\:t\geq x_{2}^{m}
\end{array},\right.
\]
for all $i_{1}...i_{m-1},i_{m+1}...i_{n}$. Then $c_{i_{m}=2}^{\left(n-1\right)}\left(t\right)=0$,
and
\begin{eqnarray*}
c^{\left(n\right)}\left(t\right) & = & c_{i_{m}=1}^{\left(n-1\right)}\left(t\right)-c_{i_{m}=2}^{\left(n-1\right)}\left(t\right)=c_{i_{m}=1}^{\left(n-1\right)}\left(t\right)\geq0.
\end{eqnarray*}
Consequently, $C^{\left(n\right)}\left(t\right)\geq0$, for any $t$.
\end{proof}

\subsubsection{Multiple Parallel Processes in Arbitrary $\SP$ Networks}

In a composition $\SP\left(X^{1},\ldots,X^{n},\ldots\right)$, the
components $X^{1},\ldots,X^{n}$ are considered parallel if any two
of them are parallel. We assume selective influences $(X^{1},\dots,X^{n},\ldots)\looparrowleft(\lambda^{1},\ldots,\lambda^{n},\emptyset)$.
We do not consider the complex situation when some of the selectively
influenced processes $X^{1},\ldots,X^{n}$ are min-parallel and some
are max-parallel. However, if they are all (pairwise) min-parallel
or all max-parallel, we have essentially the same situation as with
a simple parallel arrangement of $n$ durations.
\begin{lem}
\label{lem:parallel decomposition n}If $X^{1},\ldots,X^{n}$ are
all min-parallel or max-parallel in an $\SP$ composition, this composition
can be presented as $T=A^{1}\wedge\ldots\wedge A^{n}$ or $T=A^{1}\vee\ldots\vee A^{n}$,
respectively. In either case, $(A^{1},\dots,A^{n})\looparrowleft(\lambda^{1},\ldots,\lambda^{n})$
and the Prolongation Assumption holds for any $R=r$.\end{lem}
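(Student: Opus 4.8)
The plan is to argue by induction on $n$, reducing an arbitrary parallel network to a simple parallel architecture by pushing every operation that sits above a $\wedge$ down into its two branches. The algebra needed is, for real numbers, $\min(x_{1},\ldots,x_{m})+w=\min(x_{1}+w,\ldots,x_{m}+w)$, the lattice identity $\max\bigl(\min(x_{1},\ldots,x_{m}),w\bigr)=\min\bigl(\max(x_{1},w),\ldots,\max(x_{m},w)\bigr)$, the associativity and commutativity of $\wedge$, and --- for the max-parallel case --- the duals obtained by interchanging $\min\leftrightarrow\max$ and $\wedge\leftrightarrow\vee$. The base case $n=1$ is immediate: put $A^{1}=T$, which is an $\SP$ composition of $X^{1}$ and of $\emptyset$-influenced durations, so $(A^{1})\looparrowleft(\lambda^{1})$ and, by monotonicity of $\SP$ compositions, $x^{1}_{1r}\le x^{1}_{2r}$ forces $a^{1}_{1r}\le a^{1}_{2r}$.

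Assume now $n\ge 2$ and the claim for every network with fewer than $n$ pairwise min-parallel selectively influenced processes. First I would take $U$ to be the smallest subcomposition of $\SP(X^{1},\ldots,X^{n},\ldots)$ containing all of $X^{1},\ldots,X^{n}$. Having more than one argument, $U$ comes from Rule~2 of Definition~\ref{def:SP}: $U=V_{1}\circ V_{2}$ with $\circ\in\{\wedge,\vee,+\}$ and $V_{1},V_{2}$ proper subcompositions. By minimality of $U$ each $V_{\ell}$ omits some $X^{k}$, so there are $X^{i}$ in $V_{1}$ and $X^{j}$ in $V_{2}$ with $i\ne j$; since $X^{i},X^{j}$ are min-parallel (Definition~\ref{def:parallel-sequential}), we must have $\circ=\wedge$. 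Each $X^{k}$ then lies in exactly one of $V_{1},V_{2}$; the processes inside $V_{1}$ (resp.\ $V_{2}$) are again pairwise min-parallel (min-parallelism passes to any subcomposition containing both members of a pair) and number fewer than $n$, so the induction hypothesis gives $V_{1}=\bigwedge_{k\in I_{1}}B^{k}$ and $V_{2}=\bigwedge_{k\in I_{2}}B^{k}$ with $I_{1}\sqcup I_{2}=\{1,\ldots,n\}$ and each $B^{k}$ an $\SP$ composition containing $X^{k}$ and no other $X^{j}$. Flattening by associativity of $\wedge$, $U=\bigwedge_{k=1}^{n}B^{k}$.

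Next I would lift this to $T$. Because every $X^{k}$ already lies in $U$, the full network is obtained from $U$ by finitely many Rule~2 steps $U=W_{0},W_{1},\ldots,W_{m}=T$, where $W_{p+1}$ equals $W_{p}\circ_{p}Z_{p}$ or $Z_{p}\circ_{p}W_{p}$ for some $\circ_{p}\in\{\wedge,\vee,+\}$ and some subcomposition $Z_{p}$ disjoint from $W_{p}$, hence containing none of $X^{1},\ldots,X^{n}$. I keep the invariant $W_{p}=\bigwedge_{k=1}^{n}C^{k}_{p}$, each $C^{k}_{p}$ an $\SP$ composition containing $X^{k}$ and no other $X^{j}$: a $+Z_{p}$ (on either side) distributes over the $\min$; a $\vee Z_{p}$ distributes over the $\min$ by the lattice identity; and a $\wedge Z_{p}$ is absorbed into one chosen $C^{k}_{p}$ by associativity of $\wedge$; in every case only the $X^{k}$-free piece $Z_{p}$ enters. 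At $p=m$ this gives $T=\bigwedge_{k=1}^{n}A^{k}$ with $A^{k}:=C^{k}_{m}$. Each $A^{k}$ is an $\SP$ composition of $X^{k}$ and of $\emptyset$-influenced durations, all functions of the original $R$, so $A^{k}=g_{k}(\lambda^{k},R)$ and $(A^{1},\ldots,A^{n})\looparrowleft(\lambda^{1},\ldots,\lambda^{n})$; and, fixing $R=r$ so that every argument of $A^{k}$ other than $X^{k}$ is a constant, monotonicity of $\SP$ compositions turns $x^{k}_{1r}\le x^{k}_{2r}$ into $a^{k}_{1r}\le a^{k}_{2r}$. The remaining clause $a^{1}_{1r}\le\cdots\le a^{n}_{1r}$ of the Prolongation Assumption is then secured, as usual, by renaming the $A^{k}$ (and correspondingly the $\lambda^{k}$). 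The max-parallel case is entirely dual: replace every $\bigwedge$ by $\bigvee$ and use $\min(\max(\cdot),w)=\max(\min(\cdot,w))$ when lifting a $\wedge Z_{p}$.

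The step I expect to be the real obstacle is the structural bookkeeping: confirming that the smallest subcomposition containing all the $X^{k}$ is well defined and genuinely splits as a $\wedge$, that min-parallelism descends to the two halves, and --- in the lifting stage --- that absorbing a $\wedge Z_{p}$ into a single branch cannot slip some $X^{j}$ into the wrong $A^{k}$, which it cannot precisely because $Z_{p}$ is disjoint from $W_{p}\supseteq U$. Everything beyond that is the same monotonicity argument already carried out in Lemma~\ref{lem:parallel decomposition 2} and Theorems~\ref{thm:OR_n_simple}--\ref{thm:AND_n_simple}.
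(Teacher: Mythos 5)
Your proof is correct, and its skeleton is the same as the paper's: peel the composition apart around the $\wedge$ nodes separating the $X^{k}$, push every wrapper ($+Z$, $\vee Z$, $\wedge Z$) into the branches or absorb it into one of them, and then read off selective influence and the Prolongation Assumption from the fact that each $A^{k}$ is an $\SP$ function of $X^{k}$ and $\emptyset$-influenced durations, monotone in $X^{k}$. The paper executes this much more loosely: it lists three canonical wrapped forms (borrowed from Lemma \ref{lem:parallel decomposition 2}), strips off $A^{1}$, and says ``proceed in this fashion.'' Your version is more careful in two respects that are worth pointing out. First, you make the recursion well-founded by identifying the minimal subcomposition $U$ containing all the $X^{k}$ and arguing (via the least-common-ancestor structure of the parse tree) that its top operation must be $\wedge$; the paper takes the analogous splitting for granted. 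Second, your lifting stage handles a $\vee Z_{p}$ wrapper via the lattice identity $\max(\min(x_{1},\ldots,x_{m}),w)=\min(\max(x_{1},w),\ldots,\max(x_{m},w))$, a case a fully general (non-homogeneous) $\SP$ composition can present but which the paper's three displayed forms, involving only $+$ and $\wedge$ wrappers, do not cover. One small point you should make explicit: when you distribute $Z_{p}$ into all $n$ branches, the branches no longer have disjoint argument sets, so $\bigwedge_{k}(C_{p}^{k}+Z_{p})$ is an $\SP$ composition only modulo the renaming/duplication convention of the Remark following Definition \ref{def:SP}; the paper silently relies on the same convention in Lemma \ref{lem:parallel decomposition 2}, so this is consistent with its framework, but it deserves a sentence.
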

\begin{proof}
For the min-parallel case, by a minor modification of the proof of
Lemma \ref{lem:parallel decomposition 2} we present the $\SP$ composition
as 

\[
\underset{=A^{1}}{\underbrace{\SP^{1}(X^{1},\ldots)}}\wedge\SP^{2}(X^{2},\ldots,X^{n},\ldots),
\]
or
\[
\left(\SP^{1}(X^{1},\ldots)\wedge\SP^{2}(X^{2},\ldots,X^{n},\ldots)+X\right)\wedge Y=\underset{=A^{1}}{\underbrace{\left(\SP^{1}(X^{1},\ldots)+X\right)}}\wedge\left(\SP^{2}(X^{2},\ldots,X^{n},\ldots)+X\right)\wedge Y,
\]
or else
\[
\left(\SP^{1}(X^{1},\ldots)\wedge\SP^{2}(X^{2},\ldots,X^{n},\ldots)\wedge X\right)+Y=\underset{=A^{1}}{\underbrace{\left(\SP^{1}(X^{1},\ldots)+Y\right)}}\wedge\left(\SP^{2}(X^{2},\ldots,X^{n},\ldots)\wedge X+Y\right).
\]

Then we analogously decompose $\SP^{2}(X^{2},\ldots,X^{n},\ldots)$
achieving $A^{1}\wedge A^{2}\wedge\SP^{3}(X^{3},\ldots,X^{n},\ldots)$,
and proceed in this fashion until we reach the required $A^{1}\wedge\ldots\wedge A^{n}$.
The pattern of selective influences is seen immediately, and the Prolongation
Assumption follows by the monotonicity of the $\SP$ compositions.
The proof for the max-parallel case is analogous.\end{proof}
\begin{thm}
\label{thm:paralell general n}If $X^{1},\ldots,X^{n}$ are min-parallel
in an $\SP$ composition, then for any $r,t$, $c^{\left(n\right)}\left(t\right)\leq0$
if $n$ is even and $c^{\left(n\right)}\left(t\right)\geq0$ if $n$
is odd. Consequently, for any $t$, $C^{\left(n\right)}\left(t\right)\leq0$
if $n$ is even and $C^{\left(n\right)}\left(t\right)\geq0$ if $n$
is odd. If $X^{1},\ldots,X^{n}$ are max-parallel, then for any $r,t$,
$c^{\left(n\right)}\left(t\right)\geq0$, and for any $t$, $C^{\left(n\right)}\left(t\right)\geq0$. \end{thm}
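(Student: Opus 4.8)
The plan is to reduce this statement to the simple-parallel results already proved, exactly as Theorem~\ref{thm:2x2 general} was reduced to Lemma~\ref{lem:parallel decomposition 2} and Theorem~\ref{thm:Basis}. First I would treat the min-parallel case. Since $X^{1},\ldots,X^{n}$ are pairwise min-parallel in $\SP(X^{1},\ldots,X^{n},\ldots)$, Lemma~\ref{lem:parallel decomposition n} lets me rewrite the whole composition as $T=A^{1}\wedge\ldots\wedge A^{n}$ with $(A^{1},\ldots,A^{n})\looparrowleft(\lambda^{1},\ldots,\lambda^{n})$ and with the Prolongation Assumption holding for every $R=r$. This is precisely the hypothesis of Theorem~\ref{thm:OR_n_simple}, applied with $A^{1},\ldots,A^{n}$ in place of $X^{1},\ldots,X^{n}$. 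Hence, for every $r$ and $t$, $c^{(n)}(t)\le 0$ when $n$ is even and $c^{(n)}(t)\ge 0$ when $n$ is odd.

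Next I would pass to the unconditional contrast. Since $C^{(n)}(t)=\int_{\mathcal{R}}c_{r}^{(n)}(t)\,\dd\mu_{r}$ and integration against a probability measure preserves the weak sign of the integrand, the sign statement for $c^{(n)}$ transfers verbatim to $C^{(n)}$: $C^{(n)}(t)\le 0$ for $n$ even and $C^{(n)}(t)\ge 0$ for $n$ odd. This is the ``standard argument'' already invoked in the proofs of Theorems~\ref{thm:OR_n_simple} and~\ref{thm:AND_n_simple}. The max-parallel case is handled identically, using the second half of Lemma~\ref{lem:parallel decomposition n} to write $T=A^{1}\vee\ldots\vee A^{n}$ (again with the correct pattern of selective influences and the Prolongation Assumption for every $r$) and then invoking Theorem~\ref{thm:AND_n_simple} to get $c^{(n)}(t)\ge 0$ for all $r,t$, and hence $C^{(n)}(t)\ge 0$ for all $t$ by the same integration step.

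There is essentially no hard step here: once Lemma~\ref{lem:parallel decomposition n} is in hand the theorem is immediate. The only thing worth double-checking is the bookkeeping: that the renamed aggregates $A^{k}$ really are selectively influenced by $\lambda^{k}$ alone -- so that the $n$-th order contrast $c^{(n)}$ computed for $T$ literally coincides with the one computed for $A^{1}\wedge\ldots\wedge A^{n}$ (resp.\ $A^{1}\vee\ldots\vee A^{n}$) -- and that the ordering convention $x_{1r}^{1}\le\ldots\le x_{1r}^{n}$ assumed in Theorems~\ref{thm:OR_n_simple}--\ref{thm:AND_n_simple} can be secured by relabeling the $A^{k}$. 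Both are exactly what Lemma~\ref{lem:parallel decomposition n} delivers, so the proof will consist of little more than citing that lemma and the relevant simple-architecture theorem, then appealing to sign-preservation under $\int_{\mathcal{R}}(\cdot)\,\dd\mu_{r}$.
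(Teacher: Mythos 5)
Your proposal is correct and follows exactly the paper's own route: the paper proves this theorem by citing Lemma \ref{lem:parallel decomposition n} together with Theorems \ref{thm:OR_n_simple} and \ref{thm:AND_n_simple}, and then the sign-preservation of $\int_{\mathcal{R}}(\cdot)\,\dd\mu_{r}$ for the unconditional contrast. Your additional bookkeeping remarks (selective influences of the aggregates $A^{k}$ and relabeling to secure the ordering convention) are exactly what the lemma provides.
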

\begin{proof}
Follows from Lemma \ref{lem:parallel decomposition n} and Theorems
\ref{thm:OR_n_simple} and \ref{thm:AND_n_simple}.
\end{proof}

\subsection{Sequential Processes }

\subsubsection{Simple Serial Architectures of Size $n$}
\begin{thm}
\label{thm:PLUS_n_simple}If $T=X^{1}+\ldots+X^{n}$, then for any
$r,t$, $c^{\left[n\right]}\left(0,t\right)\geq0$, while $c^{\left[n\right]}\left(t,\infty\right)\leq0$
if n is even and $c^{\left[n\right]}\left(t,\infty\right)\geq0$ if
n is odd; moreover, $c^{\left[n\right]}(0,\infty)=0$ for any $r$.
Consequently, for any $t$, $C^{\left[n\right]}\left(0,t\right)\geq0$,
while $C^{\left[n\right]}\left(t,\infty\right)\leq0$ if n is even
and $C^{\left[n\right]}\left(t,\infty\right)\geq0$ if n is odd; moreover,
$C^{\left[n\right]}\left(0,\infty\right)=0$.\end{thm}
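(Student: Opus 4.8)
The plan is to prove the three conditional assertions — $c^{[n]}(0,t)\ge 0$, the parity-dependent sign of $c^{[n]}(t,\infty)$, and $c^{[n]}(0,\infty)=0$ — by induction on $n$, and then to read off the statements about $C^{[n]}$ by integrating the conditional ones over $\mathcal{R}$ against $\mu_r$, which preserves each weak inequality and sends each identity $c_r^{[n]}(0,\infty)=0$ to $C^{[n]}(0,\infty)=0$.

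For the base case $n=1$ the Prolongation Assumption gives $x_1^1\le x_2^1$, so $h_1(t)=h(t-x_1^1)\ge h(t-x_2^1)=h_2(t)$, whence $c^{[1]}(0,t)=c^{[1]}(t,\infty)=h_1(t)-h_2(t)\ge 0$ for every $t$, and $c^{[1]}(0,\infty)=1-1=0$; since $n=1$ is odd this agrees with the stated conventions. For the inductive step fix any $w\in\{1,\ldots,n\}$ (the only property of $w$ we use is $x_1^w\le x_2^w$, which the Prolongation Assumption grants for every $w$). Freezing the level of $\lambda^w$ at $j$ turns $T=X^1+\cdots+X^n$ into a simple serial architecture of size $n-1$ in the remaining components plus the additive constant $x_j^w$, and this constant is exactly the shift in the substitution formulas $c_{i_w}^{[n-1]}(0,t)=c^{[n-1]}(0,t-x_{i_w}^w)$, $c_{i_w}^{[n-1]}(t,\infty)=c^{[n-1]}(t-x_{i_w}^w,\infty)$, so the inductive hypothesis applies to $c^{[n-1]}$. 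Substituting these into the recursive representations (\ref{eq:recursive1}) and (\ref{eq:recursive2}) and changing variables under the integral collapses the recursion to
\[
c^{[n]}(0,t)=\int_{t-x_2^w}^{t-x_1^w}c^{[n-1]}(0,s)\,ds,\qquad c^{[n]}(t,\infty)=-\int_{t-x_2^w}^{t-x_1^w}c^{[n-1]}(s,\infty)\,ds .
\]
Because $x_1^w\le x_2^w$, each interval is correctly oriented: the first integral is $\ge 0$ since its integrand is $\ge 0$ by the inductive hypothesis, and the second carries the sign opposite to that of $c^{[n-1]}(\cdot,\infty)$, which flips parity exactly as claimed (if $n-1$ is even, $c^{[n-1]}(\cdot,\infty)\le 0$ gives $c^{[n]}(t,\infty)\ge 0$ for the then-odd $n$, and symmetrically). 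Letting $t\to\infty$ in the first identity, the interval $[t-x_2^w,\,t-x_1^w]$ has fixed length $x_2^w-x_1^w$ and slides to $+\infty$, where the nonnegative integrand $c^{[n-1]}(0,s)$ tends to $0$ (this limit is $c^{[n-1]}(0,\infty)=0$ from the hypothesis); hence $c^{[n]}(0,\infty)=0$, and the companion limit $\lim_{t\to 0}c^{[n]}(t,\infty)$ is obtained the same way, using that $c^{[n-1]}(\cdot,\infty)$ is constant on $(-\infty,0]$ with value $c^{[n-1]}(0,\infty)=0$.

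The routine algebra above rests on some bookkeeping that I expect to be the only real obstacle: one must check that the iterated improper integrals defining $c_r^{[n-1]}(t,\infty)$ converge, that $c_r^{[n-1]}(0,\cdot)$ vanishes for arguments $<0$, that $c_r^{[n-1]}(\cdot,\infty)$ vanishes for all sufficiently large arguments, and that $c_r^{[n-1]}(\cdot,\infty)$ is constant on $(-\infty,0]$. Each of these is itself a short induction on $n$, grounded in the observation that the $2^n$ jump points of the sum architecture are precisely the sums $\sum_k x_{i_k r}^k$, all lying in the interval $[\,\sum_k x_{1r}^k,\ \sum_k x_{2r}^k\,]\subseteq[0,\infty)$ since durations are nonnegative. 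Granting these, the transfer to $C^{[n]}$ is immediate from $C^{[n]}(0,t)=\int_{\mathcal{R}}c_r^{[n]}(0,t)\,d\mu_r$ and the analogous definitions of $C^{[n]}(t,\infty)$ and $C^{[n]}(0,\infty)$: each pointwise-in-$r$ inequality integrates to the same inequality, and $C^{[n]}(0,\infty)=\int_{\mathcal{R}}c_r^{[n]}(0,\infty)\,d\mu_r=0$.
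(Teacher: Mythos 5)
Your proposal is correct and follows essentially the same route as the paper: induction on $n$ with the same base case, the same use of the recursive representations and the shift substitution to collapse $c^{[n]}$ to an integral of $c^{[n-1]}$ over $[t-x_{2}^{w},\,t-x_{1}^{w}]$, and the same sign and limit analysis. The only (harmless) difference is that you obtain the limits $c^{[n]}(0,\infty)=0$ by a direct sliding-window estimate rather than the paper's appeal to the mean value theorem, and you explicitly flag the support/convergence bookkeeping that the paper leaves implicit.
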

\begin{proof}
By induction on $n$, the case $n=1$ being true by the Prolongation
Assumption: 
\[
c^{\left[1\right]}\left(0,t\right)=c^{\left[1\right]}\left(t,\infty\right)=h_{1}\left(t\right)-h_{2}\left(t\right)\geq0,
\]
and
\[
\lim_{t\rightarrow\infty}c^{\left[1\right]}\left(0,t\right)=\lim_{t\rightarrow0}c^{\left[1\right]}\left(t,\infty\right)=0.
\]
Let the statement of the theorem hold for all natural numbers up to
and including $n-1\geq1$. Using the recursive representations (\ref{eq:recursive1})-(\ref{eq:recursive2}),
\begin{equation}
\begin{array}{l}
c^{\left[n\right]}\left(0,t\right)=\int_{0}^{t}c_{i_{w}=1}^{\left[n-1\right]}\left(0,\tau\right)d\tau-\int_{0}^{t}c_{i_{w}=2}^{\left[n-1\right]}\left(0,\tau\right)d\tau\\
=\int_{0}^{t-x_{1}^{w}}c^{\left[n-1\right]}\left(0,\tau\right)d\tau-\int_{0}^{t-x_{2}^{w}}c^{\left[n-1\right]}\left(0,\tau\right)d\tau\\
=\int_{t-x_{2}^{w}}^{t-x_{1}^{w}}c^{\left[n-1\right]}\left(0,\tau\right)d\tau
\end{array},\label{eq:4}
\end{equation}
which is $\geq0$ since $c^{\left[n-1\right]}\left(0,\tau\right)\geq0$
and $t-x_{2}^{w}\leq t-x_{1}^{w}$. Analogously,

\begin{equation}
\begin{array}{l}
c^{\left[n\right]}\left(t,\infty\right)=\int_{t}^{\infty}c_{i_{w}=1}^{\left[n-1\right]}\left(\tau,\infty\right)d\tau-\int_{t}^{\infty}c_{i_{w}=2}^{\left[n-1\right]}\left(\tau,\infty\right)d\tau\\
=\int_{t-x_{1}^{w}}^{\infty}c^{\left[n-1\right]}\left(\tau,\infty\right)d\tau-\int_{t-x_{2}^{w}}^{\infty}c^{\left[n-1\right]}\left(\tau,\infty\right)d\tau\\
=-\int_{t-x_{2}^{w}}^{t-x_{1}^{w}}c^{\left[n-1\right]}\left(\tau,\infty\right)d\tau
\end{array},\label{eq:5}
\end{equation}
which is $\leq0$ if $n$ is even and $\geq0$ if $n$ is odd. Applying
the mean value theorem to the results of (\ref{eq:4}) and (\ref{eq:5}),
we get, for some $t-x_{2}^{w}<t',t''<t-x_{1}^{w}$ 
\[
\int_{t-x_{2}^{w}}^{t-x_{1}^{w}}c^{\left[n-1\right]}\left(0,\tau\right)d\tau=c^{\left[n-1\right]}\left(0,t'\right)\left(-x_{1}^{w}+x_{2}^{w}\right),
\]
\[
\int_{t-x_{2}^{w}}^{t-x_{1}^{w}}c^{\left[n-1\right]}\left(\tau,\infty\right)d\tau=c^{\left[n-1\right]}\left(t'',\infty\right)\left(-x_{1}^{w}+x_{2}^{w}\right),
\]
and, as $c^{\left[n-1\right]}\left(0,\infty\right)=0$, both expressions
tend to zero as, respectively, $t\rightarrow\infty$ (implying $t'\rightarrow\infty$)
and $t\rightarrow0$ (implying $t''\rightarrow0$).
\end{proof}

\subsubsection{Multiple Sequential Processes in Arbitrary $\SP$ Networks}

In a composition $\SP\left(X^{1},\ldots,X^{n},\ldots\right)$, the
components $X^{1},\ldots,X^{n}$ are considered sequential if any
two of them are sequential. By analogy with Theorem \ref{thm:2x2 general-1-1}
for two sequential processes and with Theorem \ref{thm:paralell general n}
for parallel $X^{1},\ldots,X^{n}$, one might expect that the result
for the simple sequential arrangement $X^{1}+\ldots+X^{n}$ will also
extend to $n$ sequential components of more complex compositions
$\SP\left(X^{1},\ldots,X^{n},\ldots\right)$. However, this is not
the case, as one can see from the following counterexample.

Consider the composition 
\begin{equation}
\SP(X^{1},X^{2},X^{3},Y)=\left(X^{1}+X^{2}+X^{3}\right)\wedge\left(Y=2\right),
\end{equation}
with $\left(X^{1},X^{2},X^{3}\right)$ selectively influenced by binary
factors, so that
\begin{equation}
\begin{array}{l}
x_{1}^{1}=x_{1}^{2}=x_{1}^{3}=0,\\
x_{2}^{1}=x_{2}^{2}=x_{2}^{3}=1.
\end{array}
\end{equation}
It follows that
\begin{equation}
\begin{array}{l}
t_{111}=0,\\
t_{112}=t_{121}=t_{211}=1,\\
t_{122}=t_{212}=t_{221}=t_{222}=2.
\end{array}
\end{equation}
This is clearly a sequential arrangement of the three durations $X^{1},X^{2},X^{3}$,
but one can easily check that $c^{\left[3\right]}\left(0,t\right)$
here is not nonnegative for all $t$. For instance, at $\mbox{t=3}$
we have, by direct computation, $c^{\left[3\right]}\left(0,t\right)=-1$.
We conclude that there is no straightforward generalization of Theorems
\ref{thm:PLUS_n_simple} to arbitrary $\SP$ compositions.

\section{Conclusion}

The work presented in this paper is summarized in the abstract. By
proving and generalizing most of the known results on the interaction
contrast of distribution functions, we have demonstrated a new way
of dealing with $\SP$ mental architectures. It is based on conditioning
all hypothetical components of a network on a fixed value of a common
source of randomness $R$ (the ``hidden variable'' of the contextuality
analysis in quantum theory), which renders these components deterministic
quantities, and then treating these deterministic quantities as random
variables with shifted Heaviside distribution functions. 

The potential advantage of this method can be seen in the fact that
the shifted Heaviside functions have the simplest possible arithmetic
among distribution functions: for every time moment it only involves
0's and 1's. As a result, the complexity of this arithmetic does not
increase with nonlinearity of the relations involved. Thus, Dzhafarov
and Schweickert (1995), Cortese and Dzhafarov (1996), and Dzhafarov
and Cortese (1996) argued that composition rules for mental architectures
need not be confined to $+,\wedge,\vee$. They analyzed architectures
involving other associative and commutative operations, such as multiplication.
Due to mathematical complexity, however, this work was confined to
networks consisting of two components that are either stochastically
independent or monotone functions of each other. It remains to be
seen whether the approach presented here, \emph{mutatis mutandis},
will lead to significant generalizations in this line of work. 

The limitations of the approach, however, are already apparent. Thus,
we were not able to achieve any progress over known results in applying
it to Wheatstone bridges (Schweickert \& Giorgini, 1999; Dzhafarov
et al., 2004). The possibility that the ``architecture'' (composition
rule) itself changes as one changes experimental factors makes the
perspective of a general theory based on our approach even more problematic
(e.g., Townsend \& Fific, 2004). It seems, however, that these problems
are not specific for just our approach.

\subsection*{Acknowledgments}

This work is supported by NSF grant SES-1155956 and AFOSR grant FA9550-14-1-0318.

\section*{REFERENCES}

\setlength{\parindent}{0cm}\everypar={\hangindent=15pt}

Bell, J. (1964). On the Einstein-Podolsky-Rosen paradox. \emph{Physics,}
1, 195-200. 

Bell, J. (1966). On the problem of hidden variables in quantum mechanics.
Review of Modern Physics, 38, 447-453.

Cereceda, J. (2000). Quantum mechanical probabilities and general
probabilistic constraints for Einstein--Podolsky--Rosen--Bohm experiments.
\textit{Foundations of Physics Letters}\textit{\emph{,}} 13, 427--442.

Cortese, J. M., \& Dzhafarov, E. N. (1996). Empirical recovery of
response time decomposition rules II: Discriminability of serial and
parallel architectures. \emph{Journal of Mathematical Psychology},
40, 203-218.

Dzhafarov, E. N. (2003). Selective influence through conditional independence.
\emph{Psychometrika}, 68, 7\textendash 26.

Dzhafarov, E. N., \& Cortese, J. M. (1996). Empirical recovery of
response time decomposition rules I: Sample-level Decomposition tests.
\emph{Journal of Mathematical Psychology}, 40, 185-202.

Dzhafarov, E. N., \& Gluhovsky, I. (2006). Notes on selective influence,
probabilistic causality, and probabilistic dimensionality. \emph{Journal
of Mathematical Psychology}, 50, 390-401.

Dzhafarov, E. N., \& Kujala, J. V. (2010). The Joint Distribution
Criterion and the Distance Tests for selective probabilistic causality.
\emph{Frontiers in Psychology} 1:151 doi: 10.3389/fpsyg.2010.00151.

Dzhafarov, E. N., \& Kujala, J. V. (2012a). Selectivity in probabilistic
causality: Where psychology runs into quantum physics. \emph{Journal
of Mathematical Psychology}, 56, 54-63.

Dzhafarov, E. N., \& Kujala, J. V. (2012b). Quantum entanglement and
the issue of selective influences in psychology: An overview. \emph{Lecture
Notes in Computer Science} 7620, 184-195.

Dzhafarov, E. N., \& Kujala, J. V. (2013). Order-distance and other
metric-like functions on jointly distributed random variables. \emph{Proceedings
of the American Mathematical Society,} 141, 3291-3301.

Dzhafarov, E. N., \& Kujala, J. V. (2014a). On selective influences,
marginal selectivity, and Bell/CHSH inequalities. \emph{Topics in
Cognitive Science} 6, 121-128.

Dzhafarov, E. N., \& Kujala, J. V. (2014b). Embedding quantum into
classical: contextualization vs conditionalization. PLoS One 9(3):e92818.
doi:10.1371/journal.pone.0092818.

Dzhafarov, E. N., \& Kujala, J. V. (2014c). A qualified Kolmogorovian
account of probabilistic contextuality. \emph{Lecture Notes in Computer
Science} 8369, 201--212.

Dzhafarov, E. N., \& Kujala, J. V. (2014d). Contextuality is about
identity of random variables. \emph{Physica Scripta} T163, 014009.

Dzhafarov, E.N., \& Kujala, J.V. (in press). Probability, random variables,
and selectivity. In W.Batchelder, H. Colonius, E.N. Dzhafarov, J.
Myung (Eds). \emph{The New Handbook of Mathematical Psychology}. Cambridge,
UK: Cambridge University Press.

Dzhafarov, E. N., Kujala, J. V., \& Larsson, J.-Å. (2015). Contextuality
in three types of quantum-mechanical systems. \emph{Foundations of
Physics} 2015, doi: 10.1007/s10701-015-9882-9.

Dzhafarov, E. N., \& Schweickert, R. (1995). Decompositions of response
times: An almost general theory. \emph{Journal of Mathematical Psychology},
39, 285-314.

Dzhafarov, E. N., Schweickert, R., \& Sung, K. (2004). Mental architectures
with selectively influenced but stochastically interdependent components.
\emph{Journal of Mathematical Psychology}, 48, 51-64.

Khrennikov, A. Yu. (2009). \emph{Contextual Approach to Quantum Formalism}.
New York: Springer.

Kochen. S., \& Specker, E. P. (1967). The problem of hidden variables
in quantum mechanics. \emph{Journal of Mathematics and Mechanics},
17, 59\textendash 87.

Kujala, J. V., \& Dzhafarov, E. N. (2008). Testing for selectivity
in the dependence of random variables on external factors. \emph{Journal
of Mathematical Psychology}, 52, 128\textendash 144.

Landau, L. J. (1988). Empirical two-point correlation functions. \emph{Foundations
of Physics}, 18, 449\textendash 460.

Masanes, Ll. , Acin, A. , \& Gisin, N. (2006). General properties
of nonsignaling theories. \textit{Physical Review} A 73, 012112.

Popescu, S. \& Rohrlich, D. (1994). Quantum nonlocality as an axiom.
\emph{Foundations of Physics}, 24, 379--385.

Roberts, S., \& Sternberg, S. (1993). The meaning of additive reaction-
time effects: Tests of three alternatives. In D. E. Meyer \& S. Kornblum
(Eds.), \emph{Attention and performance XIV: Synergies in experimental
psychology, artificial intelligence, and cognitive neuroscience} (pp.
611\textendash 654). Cambridge, MA: MIT Press.

Schweickert, R., Fisher, D.L., \& Sung, K. (2012). Discovering Cognitive
Architecture by Selectively Influencing Mental Processes. New Jersey:
World Scientific.

Schweickert, R., \& Giorgini, M. (1999). Response time distributions:
Some simple effects of factors selectively influencing mental processes.
\emph{Psychonomic Bulletin \& Review}, 6, 269\textendash 288.

Schweickert, R., Giorgini, M., \& Dzhafarov, E. N. (2000). Selective
influence and response time cumulative distribution functions in serial-parallel
task networks. \emph{Journal of Mathematical Psychology}, 44, 504-535.

Schweickert, R., \& Townsend, J. T. (1989). A trichotomy: Interactions
of factors prolonging sequential and concurrent mental processes in
stochastic discrete mental (PERT) networks. \emph{Journal of Mathematical
Psychology}, 33, 328\textendash 347.

Sternberg, S. (1969). The discovery of processing stages: Extensions
of Donders\textquoteright{} method. In W.G. Koster (Ed.), \emph{Attention
and Performance II. Acta Psychologica}, 30, 276\textendash 315.

Suppes, P., \& Zanotti, M. (1981). When are probabilistic explanations
possible? Synthese 48:191--199.

Townsend, J. T. (1984). Uncovering mental processes with factorial
experiments. \emph{Journal of Mathematical Psychology}, 28, 363\textendash 400. 

Townsend, J. T. (1990a). The truth and consequences of ordinal differences
in statistical distributions: Toward a theory of hierarchical inference.
\emph{Psychological Bulletin}, 108, 551-567. 

Townsend, J. T. (1990b). Serial vs. parallel processing: Sometimes
they look like tweedledum and tweedledee but they can (and should)
be distinguished. \emph{Psychological Science}, 1, 46-54.

Townsend, J. T., \& Fific, M. (2004). Parallel \& serial processing
and individual differences in high-speed scanning in human memory.
\emph{Perception \& Psychophysics}, 66, 953-962.

Townsend, J. T., \& Nozawa, G. (1995). Spatio-temporal properties
of elementary perception: An investigation of parallel, serial, and
coactive theories. \emph{Journal of Mathematical Psychology}, 39,
321\textendash 359.

Townsend, J. T., \& Schweickert, R. (1989). Toward the trichotomy
method of reaction times: Laying the foundation of stochastic mental
networks. \emph{Journal of Mathematical Psychology}, 33, 309\textendash 327.

Yang, H., Fific, M., \& Townsend, J. T. (2013). Survivor interaction
contrast wiggle predictions of parallel and serial models for an arbitrary
number of processes. \emph{Journal of Mathematical Psychology}, 58,
21-32. 
\end{document}